\newif\ifarxiv
\arxivtrue

\ifarxiv
	\documentclass[12pt]{article}
	\usepackage[margin=1in]{geometry}
	\usepackage[slantedGreek]{mathpazo}
	\usepackage[pdftex]{graphicx}
	\usepackage{amsfonts}
	\usepackage{amssymb}
	\usepackage{amsthm}
	\usepackage{graphicx,float}
	\usepackage{amsmath}%
	\usepackage{hyperref}
	\usepackage[numbers,sort&compress]{natbib}
\else
\fi
\usepackage{color,soul,marginnote}
\usepackage{booktabs}
\usepackage{tikz}
\usetikzlibrary{arrows.meta,positioning,calc,decorations.pathmorphing}

\DeclareMathOperator{\tr}{tr}

\newtheorem{theorem}{Theorem}
\newtheorem{proposition}[theorem]{Proposition}

\newtheorem{definition}[theorem]{Definition}

\ifarxiv
\title{Bell's Inequality, Causal Bounds, and Quantum Bayesian Computation:\\ A Unified Framework}
\author{
	\makebox[.3\linewidth]{Nicholas G.\ Polson}\\\textit{Booth School of Business}\\\textit{University of Chicago}\\\and
	\makebox[.3\linewidth]{Vadim Sokolov}\\\textit{Department of Systems Engineering}\\\textit{and Operations Research}\\\textit{George Mason University}\\\and
	\makebox[.3\linewidth]{Daniel Zantedeschi}\\\textit{College of Business}\\\textit{University of South Florida}
}
\date{First Draft: March 2026\\This Draft: \today}
\else
\fi

\graphicspath{{./fig/}}

\begin{document}
\ifarxiv
\maketitle
\begin{abstract}
Bell inequalities characterize the boundary of the local-realist correlation polytope---the set of joint probability distributions achievable by classical hidden-variable models. Quantum mechanics exceeds this boundary through non-commutativity, reaching the Tsirelson bound $2\sqrt{2}$ for CHSH. We show that this polytope structure is not specific to quantum foundations: it appears identically in the causal inference literature, where the instrumental inequality, the Balke--Pearl linear programming bounds, and the Tian--Pearl probabilities of causation all arise as facets of the same marginal compatibility polytope. Fine's theorem---that CHSH inequalities hold if and only if a joint distribution exists---is precisely the pivot: the instrumental variable model in causal inference is structurally equivalent to the Bell local hidden-variable model, with the instrument playing the role of the measurement setting and the latent confounder playing the role of the hidden variable $\lambda$. We develop this correspondence in detail, extending it to algorithmic (Kolmogorov complexity) and entropic formulations of Bell inequalities, the NPA semidefinite programming hierarchy, and the MIP$^*$=RE undecidability result. We further show that the Born-rule / Bayes-rule duality underlying quantum Bayesian computation exploits the same non-commutativity that enables Bell violation, providing polynomial speedups for posterior inference. The framework yields a concrete dictionary between quantum information theory, causal econometrics, and Bayesian computation, and suggests new directions including NPA-based quantum causal inference algorithms and quantum architectures for function approximation.
\end{abstract}
\else
\fi

\noindent\textbf{Keywords:} Bell inequalities, Tsirelson bound, NPA hierarchy, quantum Bayesian computation, causal inference, marginal compatibility polytope.

\section{Introduction}\label{sec:intro}

Bell's 1964 theorem established that the correlations between measurements on entangled quantum particles cannot be reproduced by any joint probability distribution over local hidden variables \cite{bell1964einstein}. The CHSH generalization \cite{clauser1969proposed} made this testable: the correlation polytope of local-realist models has facets $|S| \leq 2$, while quantum mechanics reaches $|S| = 2\sqrt{2}$ at the Tsirelson bound \cite{tsirelson1980quantum}. The NPA hierarchy \cite{navascues2007bounding, navascues2008convergent} provides a converging sequence of semidefinite programs (SDPs) that outer-approximate the quantum correlation set, and the MIP$^*$=RE theorem \cite{ji2021mip} shows that membership in this set is undecidable in general. These results---polytope characterization, SDP relaxation, undecidability---constitute the modern theory of quantum correlations (see \cite{brunner2014bell} for a comprehensive review).

Viewed abstractly, the Bell scenario poses a \emph{marginal compatibility problem}: given observed two-party correlations, does a joint distribution over all measurement outcomes exist? This question is not specific to quantum physics. \citet{boole1854investigation} studied it as a problem in logic. \citet{frechet1951tableaux} studied it as a coupling problem. And in causal inference---a field largely unknown to the quantum information community---the identical mathematical structure has been developed independently over three decades. \citet{robins2015bell} proved Bell's inequality directly from the theory of causal interactions; \citet{gill2023bells} showed that the CHSH inequalities follow as an exercise in the modern DAG-based theory of statistical causality.

In econometrics and biostatistics, the fundamental obstacle is that potential outcomes under different treatments are never jointly observed for the same individual. \citet{manski1990nonparametric} showed that without structural assumptions, treatment effects are only \emph{partially identified}: they lie in an interval determined by Fr\'{e}chet bounds on the unobserved joint. \citet{balke1997bounds} derived tight bounds on causal effects via linear programming over a \emph{response-function polytope}---the convex hull of deterministic response types, which is precisely the local-realist polytope of Bell's scenario restricted to the instrumental variable (IV) DAG. \citet{pearl1995testability} derived an \emph{instrumental inequality} that is the causal analog of the CHSH inequality: a testable constraint on observed distributions that must hold if the IV structural assumptions are valid. \citet{tian2000probabilities} bounded counterfactual probabilities of necessity and sufficiency.

The correspondence is exact. Fine's theorem \cite{fine1982hidden}---that CHSH inequalities hold if and only if a joint distribution exists---is the pivot. The IV model maps onto the Bell local hidden-variable model: the instrument $Z$ is the measurement setting, the treatment $X$ is Alice's outcome, the observed outcome $Y$ is Bob's outcome, and the latent confounder $U$ is the hidden variable $\lambda$. The inflation technique of \citet{wolfe2019inflation} subsumes both Bell inequalities and instrumental inequalities as special cases of compatibility constraints on causal DAGs with latent variables.

The unification yields transferable tools in both directions. The NPA SDP hierarchy, developed for bounding quantum correlations, can be applied to compute bounds on causal effects in DAGs with latent confounders. Conversely, the response-type linear programming machinery of causal inference provides an explicit vertex enumeration of the local-realist polytope that complements the facet description given by Bell inequalities.

A second thread connects this polytope geometry to quantum computation. The Born rule in quantum mechanics and Bayes' rule in statistics are both conditioning operations---one on density matrices, the other on probability distributions. When the density matrix is diagonal, the two coincide \cite{polson2023quantum}. When it is not, quantum coherence enables correlations that no classical joint distribution can reproduce---the same non-commutativity that produces Bell violation provides computational speedup for posterior inference. The Kolmogorov Superposition Theorem provides the classical counterpart, and K-GAM networks \cite{polson2025kgam} implement this decomposition with the horseshoe prior selecting the minimal architecture.

We make the following contributions:
\begin{enumerate}
\item A \emph{unified polytope framework} connecting Bell inequalities, Fr\'{e}chet bounds, Pearl's instrumental inequality, and Manski's partial identification as facets of the same class of marginal-compatibility problems.
\item \emph{Kolmogorov complexity bounds} as the algorithmic version of Bell inequalities, characterizing the minimum description length of joint quantum information and connecting to the Kakeya conjecture via the Lutz conditional-dimension principle.
\item \emph{Quantum Bell inequalities}---the Tsirelson hierarchy, the NPA semidefinite program, the MIP$^*$=RE resolution---as the operator-algebraic extension of the classical polytope, where non-commutativity enables correlations strictly beyond the Fr\'{e}chet-feasible region.
\item \emph{Quantum Bayesian Computation} (QBC) as the computational realization of the Born-rule / Bayes-rule duality, showing that the same non-commutativity enabling Bell violation provides polynomial-to-exponential speedup in posterior sampling (quadratic for MCMC, up to exponential for linear algebra conditional on favorable condition numbers).
\item \emph{K-GAM and the horseshoe prior} as the classical approximation to quantum function factorization, where Kolmogorov's Superposition Theorem provides the classical analog of quantum superposition.
\end{enumerate}

Section~\ref{sec:polytope} develops the classical polytope framework. Section~\ref{sec:causal} maps this framework onto causal inference. Section~\ref{sec:kolmogorov} reformulates Bell inequalities in the language of Kolmogorov complexity and Shannon entropy. Section~\ref{sec:quantum} extends the polytope to the quantum setting via the Tsirelson bound and the NPA hierarchy. Section~\ref{sec:qbc} presents quantum Bayesian computation and its classical counterpart in Generative Bayesian Computation (GBC) and K-GAM networks. Section~\ref{sec:unified} collects these threads into a unified polytope dictionary. Proofs are in the Appendix.

\section{The Classical Feasibility Polytope}\label{sec:polytope}

We begin with the mathematical core: the set of joint probability distributions compatible with observed marginals. This object---a convex polytope in the space of joint distributions---underlies every bound in the paper. The key intuition is simple: observing marginals constrains but does not determine the joint. The feasible set of joints is a convex body whose geometry encodes what can and cannot be inferred from marginal data alone. Bounds on functionals of the joint (treatment effects, correlations, counterfactual probabilities) are optimization problems over this polytope.

\subsection{Fr\'{e}chet--Hoeffding Bounds}

Let $F_0, F_1$ be univariate distribution functions. The Fr\'{e}chet problem asks: what is the set of bivariate distributions $F$ with marginals $F_0$ and $F_1$? In the causal inference application (Section~\ref{sec:causal}), $F_0$ and $F_1$ are the marginal distributions of potential outcomes $Y(0)$ and $Y(1)$, but the result is purely probabilistic.

\begin{theorem}[Fr\'{e}chet, 1951]\label{thm:frechet}
The set of feasible joint distributions $\mathcal{F}(F_0, F_1)$ is the convex set bounded by
\[
W(u,v) = \max(u + v - 1, 0) \leq F\bigl(F_0^{-1}(u), F_1^{-1}(v)\bigr) \leq \min(u,v) = M(u,v)
\]
for all $(u,v) \in [0,1]^2$. The upper bound $M$ is the comonotone coupling; the lower bound $W$ is the countermonotone coupling.
\end{theorem}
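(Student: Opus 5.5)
The plan is to prove three things in turn. First, $\mathcal{F}(F_0,F_1)$ is convex. Second, every member satisfies the two inequalities. Third, both bounds are attained, by the comonotone and countermonotone couplings respectively.

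\emph{Convexity} is immediate. If $F$ and $G$ both have marginals $F_0,F_1$, then so does $tF+(1-t)G$ for $t\in[0,1]$: the marginals are obtained by sending one argument to $+\infty$, which is a linear operation. Coordinatewise monotonicity and the rectangle inequality are also preserved under convex combination.

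\emph{The inequalities.} Fix $F\in\mathcal{F}(F_0,F_1)$ and points $(x,y)$. The upper bound comes from the inclusion
\[
\{X\le x, Y\le y\}\subseteq\{X\le x\},
\]
together with the same inclusion into $\{Y\le y\}$. This gives $F(x,y)\le\min(F_0(x),F_1(y))$. The lower bound comes from inclusion--exclusion:
\[
P(X\le x,Y\le y)=F_0(x)+F_1(y)-P(X\le x\text{ or }Y\le y)\ge F_0(x)+F_1(y)-1,
\]
and trivially $F\ge 0$. Now substitute $x=F_0^{-1}(u)$ and $y=F_1^{-1}(v)$, using the generalized (left-continuous) inverse. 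This gives $F_0(F_0^{-1}(u))\ge u$, with equality when $F_0$ is continuous, and the statement is in quantile coordinates. The cleanest route is therefore Sklar's theorem. Write $F(x,y)=C(F_0(x),F_1(y))$ for a copula $C$, and prove $W\le C\le M$ on $[0,1]^2$ directly by the same two arguments applied to uniform marginals. The displayed inequality is then read as a statement about the copula, evaluated at $(u,v)$ in the range of the marginals. I expect this bookkeeping with generalized inverses and atoms to be the only delicate point. I would handle it by stating the result in the copula form, which is exact for all $(u,v)\in[0,1]^2$, and noting that it coincides with the displayed form when the marginals are continuous.

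\emph{Attainment.} Let $U\sim\mathrm{Unif}(0,1)$. Set $(X,Y)=(F_0^{-1}(U),F_1^{-1}(U))$ and, separately, $(X,Y)=(F_0^{-1}(U),F_1^{-1}(1-U))$. Both pairs have the correct marginals, by the quantile transform $F_i^{-1}(U)\sim F_i$. The first has CDF
\[
P\bigl(U\le F_0(x),\,U\le F_1(y)\bigr)=\min(F_0(x),F_1(y)).
\]
The second has CDF
\[
P\bigl(1-F_1(y)\le U\le F_0(x)\bigr)=\max(F_0(x)+F_1(y)-1,0).
\]
This uses the equivalence $F^{-1}(w)\le t\iff w\le F(t)$. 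Hence $M$ and $W$ are the sharp upper and lower envelopes, and both lie in $\mathcal{F}(F_0,F_1)$, which justifies calling them the comonotone and countermonotone couplings. Together with convexity, this shows the feasible set is the convex set sandwiched between these attained bounds.
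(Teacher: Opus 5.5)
The paper gives no proof of Theorem~\ref{thm:frechet}; it is quoted as a classical result (Fr\'{e}chet, 1951), so there is nothing to compare against. Your argument is the standard one and is correct. Convexity of the mixture class, the inclusion bound, the inclusion--exclusion bound, and attainment via $(F_0^{-1}(U),F_1^{-1}(U))$ and $(F_0^{-1}(U),F_1^{-1}(1-U))$ are all sound. This includes your use of $F^{-1}(w)\le t \iff w\le F(t)$ for $w\in(0,1)$.

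Three remarks sharpen what you actually prove.

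\emph{The atom issue is not mere bookkeeping.} The displayed statement is false as written when a marginal has atoms. Take $X=Y\sim\mathrm{Bernoulli}(1/2)$ and $u=v=1/4$. Then $F\bigl(F_0^{-1}(u),F_1^{-1}(v)\bigr)=F(0,0)=1/2>1/4=M(u,v)$. The comonotone coupling itself violates the displayed upper bound at that point. The lower bound survives, because $F_i(F_i^{-1}(u))\ge u$ pushes in the harmless direction; the upper bound does not. So your restriction to $(u,v)\in\mathrm{Ran}\,F_0\times\mathrm{Ran}\,F_1$, where $F_i(F_i^{-1}(u))=u$, or to continuous marginals, is necessary, not a convenience.

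\emph{You do not need Sklar's theorem.} Its discontinuous case is itself nontrivial. Your two inclusion arguments already give
\[
W\bigl(F_0(x),F_1(y)\bigr)\le F(x,y)\le M\bigl(F_0(x),F_1(y)\bigr)
\]
for all $(x,y)$ and all marginals. This is the exact form of the result, and your quantile couplings show both sides are attained. By contrast, ``the copula form on all of $[0,1]^2$'' is a statement about an arbitrary Sklar copula $C$, not about $F$, since $C$ is non-unique off the ranges.

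\emph{Read ``the convex set bounded by'' as sharp envelopes.} It can only mean that $W$ and $M$ are the sharp pointwise envelopes of $\mathcal{F}(F_0,F_1)$. The order interval $[W,M]$ contains functions that are not 2-increasing, and even non-monotone ones, so it is strictly larger than the feasible set. Your closing sentence should therefore claim only three things: $\mathcal{F}(F_0,F_1)$ is convex, it is contained in $[W,M]$, and both endpoints are attained. It should not claim equality with the order interval.
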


This is the master inequality. Every other bound in this paper is a constrained version of Theorem~\ref{thm:frechet}. The comonotone coupling $M$ corresponds to rank-preserving assignment (the best case for treatment effects when better units select into treatment); the countermonotone $W$ corresponds to rank-reversing assignment.

\subsection{The Bell--Boole Inequality as a Fr\'{e}chet Bound}

Consider three $\pm 1$-valued random variables $A, B, C$ jointly distributed on a classical probability space. The original \emph{Boole--Bell inequality} \cite{boole1854investigation, bell1964einstein} states
\begin{equation}\label{eq:boole-bell}
|E[AB] - E[AC]| \leq 1 - E[BC].
\end{equation}
This is a necessary condition for a joint distribution to exist. \citet{bell1964einstein} showed quantum mechanics violates this for measurements on entangled spin-1/2 particles---the correlations cannot be realized by any coupling of the marginal $\pm 1$ distributions.

\begin{proposition}[Probabilistic reformulation]\label{prop:bell-frechet}
Bell's inequality is equivalent to the condition that the $3 \times 3$ correlation matrix
\[
\Sigma = \begin{pmatrix} 1 & E[AB] & E[AC] \\ E[AB] & 1 & E[BC] \\ E[AC] & E[BC] & 1 \end{pmatrix}
\]
is the covariance matrix of some joint distribution on $\{-1,+1\}^3$. Violation is equivalent to $\Sigma$ lying outside the Fr\'{e}chet-feasible set for $\pm 1$ variables.
\end{proposition}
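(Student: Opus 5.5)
The plan is to identify both conditions with one explicit polytope in $\mathbb{R}^3$, the \emph{correlation polytope} of three $\pm1$ variables. Write $x=E[AB]$, $y=E[AC]$, $z=E[BC]$. A deterministic assignment $(a,b,c)\in\{-1,+1\}^3$ produces the product vector $(ab,ac,bc)$, whose coordinates multiply to $(abc)^2=1$. So only four vertices occur:
\[
v_0=(1,1,1),\quad v_1=(1,-1,-1),\quad v_2=(-1,1,-1),\quad v_3=(-1,-1,1).
\]
Any joint law on $\{-1,+1\}^3$ is a mixture of deterministic assignments. Hence the set of achievable $(x,y,z)$ is exactly the tetrahedron $T=\mathrm{conv}\{v_0,v_1,v_2,v_3\}$. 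The proposition therefore reduces to two claims: that $T$ is the ``Fr\'echet-feasible set'' for the matrix $\Sigma$, and that $T$ is cut out by the Boole--Bell inequality (\ref{eq:boole-bell}).

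\textbf{Step 1: facets of $T$.} For each vertex, compute the plane through the other three. This gives the four facet inequalities
\[
x+y+z\ge -1,\qquad -x+y+z\le 1,\qquad x-y+z\le 1,\qquad x+y-z\le 1 .
\]
The Boole--Bell inequality $|x-y|\le 1-z$ is exactly the pair $x-y+z\le 1$ and $-x+y+z\le 1$. The remaining two facets are the same inequality after relabeling: replacing $B$ by $-B$ (or permuting the roles of $A,B,C$) maps (\ref{eq:boole-bell}) onto them. I will therefore read ``Bell's inequality'' in the proposition as (\ref{eq:boole-bell}) together with its sign and relabelings, as Fine's theorem does for CHSH. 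Under that reading the inequalities hold if and only if $(x,y,z)\in T$.

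\textbf{Step 2: necessity.} If a joint law exists, the pointwise identity $|ab-ac|=|a|\,|b-c|=1-bc$ for $\pm1$ values, together with the triangle inequality, gives $|E[AB]-E[AC]|\le E|AB-AC|=1-E[BC]$. The relabeled versions follow in the same way.

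\textbf{Step 3: sufficiency, the constructive direction.} Given $(x,y,z)$ satisfying all four facet inequalities, solve $(x,y,z)=\sum_i p_i v_i$ with $\sum_i p_i=1$. The solution is, for example, $p_0=(1+x+y+z)/4$ and $p_1=(1+x-y-z)/4$, and similarly for $p_2,p_3$. Each $p_i$ is nonnegative precisely because of the corresponding facet inequality.

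Next lift this to a law on $\{-1,+1\}^3$. Each $v_i$ is produced by exactly two assignments, $(a,b,c)$ and $(-a,-b,-c)$. Put mass $p_i/2$ on each of them. The resulting law has all means zero, so the second-moment matrix coincides with the covariance matrix and equals $\Sigma$. This settles the ``is the covariance matrix of some joint distribution'' formulation, and violation then means exactly $(x,y,z)\notin T$.

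The main obstacle is not computational but interpretive. The statement must be read with all relabelings of (\ref{eq:boole-bell}); otherwise the equivalence fails. For example, $(x,y,z)=(-1,-1,-1)$ satisfies (\ref{eq:boole-bell}), yet it lies outside $T$. I would also remark that positive semidefiniteness of $\Sigma$ alone is strictly weaker than membership in $T$. This is the point where the $\pm1$ (Fr\'echet) structure, rather than Gaussian feasibility, does the work, and it is the classical analogue of the quantum set discussed later.
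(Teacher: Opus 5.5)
Your proof is correct, and it takes a different and more complete route than the paper's. The paper's appendix argument is a sketch. It recalls $E[AB]=2P(A=B)-1$, notes that $\Sigma$ must be PSD and realizable on $\{-1,+1\}^3$, and then asserts that \eqref{eq:boole-bell} is ``exactly the constraint from Fr\'echet feasibility applied to the 2-dimensional marginals.'' It gives no construction for the converse.

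That framing buys a one-line link to Theorem~\ref{thm:frechet}, but it cannot carry the proof on its own. With unbiased $\pm1$ margins, the Fr\'echet--Hoeffding bounds applied to each pair separately permit every correlation in $[-1,1]$. The content therefore lies entirely in the joint compatibility of the three pairs, which is what your tetrahedron $T$ encodes. Your vertex/facet computation, the pointwise identity $|ab-ac|=1-bc$ for necessity, and the barycentric weights $p_i$ lifted symmetrically to a zero-mean law for sufficiency together prove both directions.

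Your approach also exposes something the paper's sketch passes over: read literally, with \eqref{eq:boole-bell} as a single inequality, the proposition is false. Your point $(-1,-1,-1)$ shows this. The point $x=y=z=-\tfrac12$ shows it even among PSD matrices $\Sigma$, since it satisfies \eqref{eq:boole-bell} but violates $x+y+z\ge -1$. So your reading with sign flips and relabelings is necessary, not just convenient.

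Two small tightenings:
\begin{itemize}
\item Permuting $A,B,C$ alone only yields the extra facet $x+y-z\le 1$. The facet $x+y+z\ge -1$ needs the flip $B\mapsto -B$, which, as you say, produces both missing facets at once.
\item For necessity in the ``covariance'' formulation, the unit diagonal forces $E[A]=E[B]=E[C]=0$ for $\pm1$ variables. The entries of $\Sigma$ are then second moments, and your Step~2 applies verbatim.
\end{itemize}
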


This makes precise the sense in which Bell violation is a Fr\'{e}chet infeasibility---an ``impossible coupling'' in the terminology of \citet{gill2003accardi}. The quantum correlation matrix $\Sigma_Q$ (achievable by entangled qubits) is positive semidefinite but fails the constraint that it be the second moment matrix of a $\{-1,+1\}^3$-valued random vector.

\subsection{Fine's Theorem: The Pivot}\label{sec:fine}

The following characterization unifies the Bell and statistical perspectives.

\begin{theorem}[Fine, 1982]\label{thm:fine}
For a quantum correlation experiment with observables $A, A', B, B' \in \{-1,+1\}$, the following are equivalent:
\begin{enumerate}
\item There is a deterministic local hidden variable model.
\item There is a factorizable stochastic model.
\item There is a single joint distribution for all four observables returning the experimental probabilities.
\item The CHSH inequalities hold: $|E[AB] + E[AB'] + E[A'B] - E[A'B']| \leq 2$.
\end{enumerate}
\end{theorem}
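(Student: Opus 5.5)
We prove the cycle of implications $(1)\Rightarrow(2)\Rightarrow(4)\Rightarrow(3)\Rightarrow(1)$. Throughout, condition 4 is read as the full set of eight CHSH inequalities. These are obtained by placing the minus sign on each of the four terms in turn and taking both signs of the absolute value. The statement displays only one representative.

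\textbf{The implications $(1)\Rightarrow(2)$ and $(2)\Rightarrow(4)$.}
\begin{itemize}
\item $(1)\Rightarrow(2)$ is immediate. A deterministic local model, with outcomes $A(\lambda),A'(\lambda),B(\lambda),B'(\lambda)$ given as functions of $\lambda$, is a special case of a factorizable model. Its local response probabilities are simply $0/1$.
\item For $(2)\Rightarrow(4)$, write $E[AB]=\int a(\lambda)b(\lambda)\,d\mu(\lambda)$. Here $a,a',b,b'\in[-1,1]$ are the local conditional means, and factorizability is exactly what makes the product form valid.
\item Pointwise in $\lambda$, the CHSH expression $a(b+b')+a'(b-b')$ is multilinear in $(a,a',b,b')\in[-1,1]^4$. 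Its extrema are therefore attained at vertices of the cube.
\item At a vertex, one of $b\pm b'$ equals $0$ and the other equals $\pm2$, so the expression has absolute value at most $2$.
\item Integrating against $\mu$ gives condition 4. The same argument covers each sign pattern.
\end{itemize}

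\textbf{The implication $(3)\Rightarrow(1)$.} Given a joint distribution $P$ on $\{-1,+1\}^4$, let $\lambda$ be the random quadruple itself, distributed according to $P$. Define the response functions as coordinate projections. This is a deterministic local model, and by construction it reproduces every pairwise marginal $P(A,B)$, $P(A,B')$, $P(A',B)$, $P(A',B')$.

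\textbf{The implication $(4)\Rightarrow(3)$: the main obstacle.} This is the substantive direction, Fine's actual content. We must build a joint distribution on $\{\pm1\}^4$ with the four prescribed pairwise marginals. Each pairwise marginal is determined by two one-point means and one correlation. The plan is a gluing construction in the spirit of the Fr\'echet coupling of Theorem~\ref{thm:frechet}.

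\begin{enumerate}
\item Seek $P$ of the form $P(a,a',b,b')=P(a,b,b')\,P(a',b,b')/P(b,b')$. Equivalently, first choose a joint law of $(B,B')$, then extend it to $(A,B,B')$ and to $(A',B,B')$, and glue the two extensions conditionally independently given $(B,B')$.
\item The joint of $(B,B')$ is a free parameter, say $t=P(B=1,B'=1)$. It must lie in its Fr\'echet interval, which is determined by the fixed marginals of $B$ and $B'$.
\item The triple $(A,B,B')$ is required to have prescribed $(A,B)$ and $(A,B')$ marginals. By a three-variable Fr\'echet analysis, such a triple exists if and only if $t$ lies in an interval $I_A$. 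This interval is cut out by Boole--Bell type inequalities of the form~\eqref{eq:boole-bell} (cf.\ Proposition~\ref{prop:bell-frechet}), with one-point-mean corrections.
\item Likewise, the triple $(A',B,B')$ exists if and only if $t\in I_{A'}$.
\item A global joint exists whenever the intersection $I_A\cap I_{A'}$, together with the Fr\'echet interval for $(B,B')$, is nonempty.
\item By the Helly-type property of intervals, nonemptiness reduces to checking every lower endpoint against every upper endpoint. Each such pairwise comparison should unpack to one of the eight CHSH inequalities or to a trivial nonnegativity condition.
\end{enumerate}

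The main obstacle is the bookkeeping in the last two steps. It requires writing the endpoints explicitly in terms of the correlations and the one-point means, and verifying that the sum of the relevant Boole-type constraints is exactly a CHSH combination. As a sanity check and an alternative route, one can instead argue by duality. The local polytope of joint distributions, projected to the observed marginals, has only the CHSH inequalities and the positivity constraints as facets. This can be verified by enumerating the 16 deterministic vertices. Farkas' lemma then yields $(4)\Rightarrow(3)$.
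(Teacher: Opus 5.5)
The paper gives no proof of this theorem. It is quoted from Fine (1982) and used as a black box, for instance in the proof of Theorem~\ref{thm:bell-bayes}, so there is nothing to compare your argument against. On its own terms, your cycle is correct.

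\textbf{The easy implications.} $(1)\Rightarrow(2)$, $(2)\Rightarrow(4)$ and $(3)\Rightarrow(1)$ are complete as written. Reading condition~4 as all eight CHSH inequalities is forced, not a convention. The no-signaling table with unbiased marginals and $E[AB]=-1$, $E[AB']=E[A'B]=E[A'B']=1$ has $S=0$, yet it admits no joint distribution.

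\textbf{What $(4)\Rightarrow(3)$ silently assumes.} You use no-signaling: each observable must have the same marginal in every table containing it. Your ``fixed marginals of $B$ and $B'$'' needs this, and so does the last step below. It holds automatically in a quantum experiment, but it is a genuine hypothesis. Condition~4 constrains only correlations, so without it the implication fails.

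\textbf{The deferred bookkeeping closes.} It is cleaner in the coordinate $c=E[BB']$ than in $t$. Expand $P(a,b,b')=\tfrac18 E[(1+aA)(1+bB)(1+b'B')]$ and eliminate the free parameter $E[ABB']$. The triple then exists if and only if two conditions hold:
\begin{itemize}
\item the three pair tables are valid;
\item the four Boole inequalities $\pm E[AB]\pm E[AB']\pm c\geq -1$ hold (even number of minus signs).
\end{itemize}
The one-point means cancel, so there are no ``mean corrections'':
\[
I_A=\bigl[\,|E[AB]+E[AB']|-1,\ 1-|E[AB]-E[AB']|\,\bigr].
\]
$I_{A'}$ is analogous, and the Fr\'echet interval is $F=[\,|E[B]+E[B']|-1,\ 1-|E[B]-E[B']|\,]$. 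The endpoint comparisons then resolve as follows.
\begin{itemize}
\item Comparing an interval with itself is trivial, since $|u+v|+|u-v|=2\max(|u|,|v|)\leq 2$.
\item Comparing $I_A$ with $I_{A'}$ gives $|E[AB]+E[AB']|+|E[A'B]-E[A'B']|\leq 2$ and $|E[A'B]+E[A'B']|+|E[AB]-E[AB']|\leq 2$. These are exactly the eight CHSH inequalities.
\item Comparisons with $F$ hold automatically because $I_A\cap F\neq\emptyset$. Gluing through $A$, via $P(a,b)P(a,b')/P(a)$, always produces some triple, again provided the $A$-marginals of the $(A,B)$ and $(A,B')$ tables agree.
\end{itemize}

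\textbf{The alternative route.} The ``Farkas'' route is not independent of the theorem. The claim that positivity and CHSH are the only facets of the projected 16-vertex polytope is precisely the content of $(4)\Rightarrow(3)$. Listing the vertices does not establish it without a facet-enumeration computation, such as Fourier--Motzkin. Once the facet list is known, membership follows directly from Minkowski--Weyl, with no Farkas argument needed.
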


Fine's theorem says that Bell's locality condition \emph{is} the Fr\'{e}chet joint-distribution-existence condition for $\pm 1$ variables. This is the pivot connecting quantum physics to classical probability (see \cite{gill2014statistics} for a thorough statistical treatment).

\subsection{The CHSH Polytope}\label{sec:chsh}

In the CHSH scenario \cite{clauser1969proposed}, Alice and Bob each choose between two measurement settings, obtaining $\pm 1$ outcomes. The space of all possible correlation tables $\{E[AB], E[AB'], E[A'B], E[A'B']\}$ is 4-dimensional. The \emph{local-realist polytope} is the convex hull of the 16 deterministic strategies (4 bits of pre-agreed answers). Its facets are the CHSH inequalities:
\begin{equation}\label{eq:chsh}
|\langle AB \rangle + \langle AB' \rangle + \langle A'B \rangle - \langle A'B' \rangle| \leq 2.
\end{equation}
Quantum mechanics achieves values up to $2\sqrt{2}$ (the Tsirelson bound), lying strictly outside this polytope but inside the no-signaling polytope (bound of 4 \cite{popescu1994quantum}). The geometry is (see \cite{brunner2014bell} for a comprehensive review)
\[
\text{Local polytope} \subset \text{Quantum set} \subset \text{No-signaling polytope}.
\]
Figure~\ref{fig:polytope} illustrates this containment in a cross-section of the correlation space. The local polytope (inner square) has facets at $|S|=2$; the quantum set (inscribed circle) reaches $|S|=2\sqrt{2}$; the no-signaling polytope (outer square) has facets at $|S|=4$.

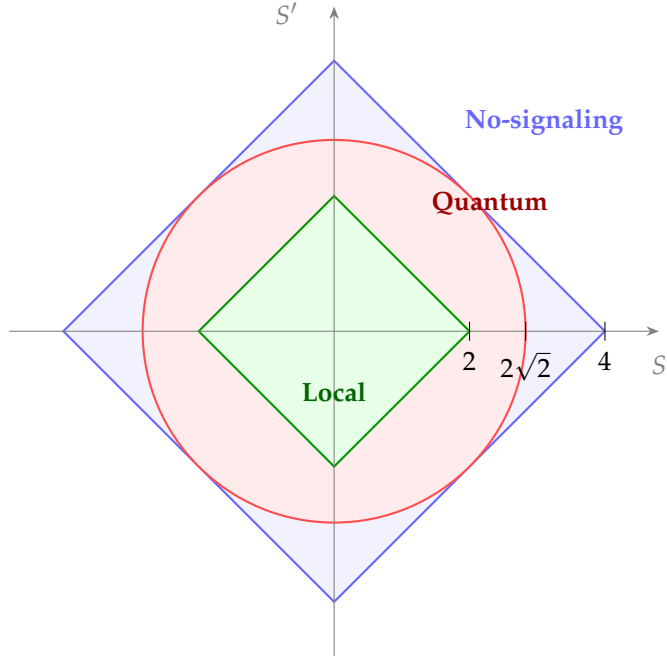
\begin{figure}[H]
\centering
\begin{tikzpicture}[scale=1.8]
  \fill[blue!5] (0,2) -- (2,0) -- (0,-2) -- (-2,0) -- cycle;
  \draw[blue!60, thick] (0,2) -- (2,0) -- (0,-2) -- (-2,0) -- cycle;
  \fill[red!8] (0,0) circle ({sqrt(2)});
  \draw[red!70, thick] (0,0) circle ({sqrt(2)});
  \fill[green!10] (0,1) -- (1,0) -- (0,-1) -- (-1,0) -- cycle;
  \draw[green!60!black, thick] (0,1) -- (1,0) -- (0,-1) -- (-1,0) -- cycle;
  \draw[gray, -Stealth] (-2.4,0) -- (2.4,0);
  \draw[gray, -Stealth] (0,-2.4) -- (0,2.4);
  \draw (1,0.07) -- (1,-0.07) node[below, font=\footnotesize] {$2$};
  \draw ({sqrt(2)},0.07) -- ({sqrt(2)},-0.07) node[below, font=\footnotesize] {$2\sqrt{2}$};
  \draw (2,0.07) -- (2,-0.07) node[below, font=\footnotesize] {$4$};
  \node[green!40!black, font=\footnotesize\bfseries] at (0,-0.45) {Local};
  \node[red!60!black, font=\footnotesize\bfseries] at (1.15,0.95) {Quantum};
  \node[blue!60, font=\footnotesize\bfseries] at (1.55,1.55) {No-signaling};
  \node[gray, font=\footnotesize] at (2.4, -0.25) {$S$};
  \node[gray, font=\footnotesize] at (-0.35, 2.35) {$S'$};
\end{tikzpicture}
\caption{Cross-section of the correlation space for the CHSH scenario, showing the three nested convex bodies. The local-realist polytope (inner diamond, $|S| \leq 2$) is contained in the quantum set (circle, $|S| \leq 2\sqrt{2}$), which is contained in the no-signaling polytope (outer diamond, $|S| \leq 4$). Here $S = \langle AB\rangle + \langle AB'\rangle + \langle A'B\rangle - \langle A'B'\rangle$ and $S' = \langle AB\rangle - \langle AB'\rangle + \langle A'B\rangle + \langle A'B'\rangle$ are two independent CHSH combinations.}\label{fig:polytope}
\end{figure}

This three-layer structure has a direct causal analog. The local polytope corresponds to causal models with classical (factorable) confounding satisfying the IV structural assumptions; effects are partially identified within the Balke--Pearl bounds. The quantum set corresponds to models where the confounder is quantum-mechanical, yielding a wider identified set. The no-signaling polytope corresponds to models with arbitrary confounding and no structural assumptions---Manski's worst-case width-1 bounds. Structural assumptions move the analyst from the outer layer inward, tightening the feasible set.

\section{Causal Inference as Polytope Geometry}\label{sec:causal}

The polytope framework of Section~\ref{sec:polytope} has direct counterparts in causal inference, where the fundamental problem is the same: bounding a quantity (treatment effect, probability of causation) that depends on a joint distribution that is never fully observed.

\subsection{Pearl's Instrumental Inequality}

\begin{definition}\label{def:iv}
The canonical instrumental variable (IV) model consists of an instrument $Z$, treatment $X$, outcome $Y$, and latent confounder $U$, with the causal structure $Z \to X \to Y$, $U \to X$, $U \to Y$, and $Z \perp U$.
\end{definition}

\begin{theorem}[Pearl, 1995]\label{thm:pearl-iv}
Under the IV model, for each $x \in \mathcal{X}$:
\begin{equation}\label{eq:pearl-iv}
\max_z \sum_y \max_{x'} P(Y=y, X=x' \mid Z=z) \leq 1.
\end{equation}
This is a necessary testable implication of the IV structural assumptions.
\end{theorem}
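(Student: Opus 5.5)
For each fixed $z$ the displayed quantity is at most $\sum_y\sum_{x'}P(Y=y,X=x'\mid Z=z)=1$, so~\eqref{eq:pearl-iv} as printed holds for any conditional distribution and the quantifier ``for each $x$'' plays no role. I will record this one-line bound first. I then plan to prove the form that actually tests the IV assumptions, Pearl's instrumental inequality
\[
\max_{x}\sum_y \max_{z} P(Y=y, X=x \mid Z=z) \leq 1,
\]
in which the maximum over the instrument sits inside the sum. The argument follows the response-function construction of \citet{balke1997bounds}: it is the causal analog of the vertex description of the local polytope in Section~\ref{sec:chsh}.

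\textbf{Step 1 (structural representation).} By Definition~\ref{def:iv} and the usual structural-equation reading of the DAG, there are measurable functions $f,g$ with $X=f(Z,U)$ and $Y=g(X,U)$. The key feature is that $Z$ does not enter $g$ (exclusion restriction). Absorbing any exogenous noise into $U$, we may also take $Z\perp U$.

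\textbf{Step 2 (express observed conditionals via $U$).} Since $Z\perp U$, conditioning on $Z=z$ leaves the law of $U$ unchanged. Hence
\[
P(Y=y,X=x\mid Z=z)=P_U\bigl(f(z,U)=x,\ g(x,U)=y\bigr).
\]
Here I substitute $X=x$ into $g$, which is legitimate on the event $f(z,U)=x$.

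\textbf{Step 3 (disjointness).} Fix $x$, and for each $y$ let $z_y$ attain $\max_z P(Y=y,X=x\mid Z=z)$. By Step 2,
\[
P(Y=y,X=x\mid Z=z_y)\le P_U\bigl(g(x,U)=y\bigr).
\]
The events $\{g(x,U)=y\}$ are pairwise disjoint across $y$ because $g(x,\cdot)$ is a function. Summing over $y$ therefore gives a total of at most $1$, and taking the maximum over $x$ finishes the proof.

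The only delicate point is Step 2. The distribution of $U$ must not depend on $z$, which is where $Z\perp U$ is used, and $z$ must be absent from $g$, which is the exclusion restriction. Together these let the same event $\{g(x,U)=y\}$ dominate every $z_y$ simultaneously, exactly as a single hidden-variable distribution dominates all settings in Theorem~\ref{thm:fine}. I would make this step explicit, including the reduction of a possibly stochastic $g$ to a deterministic one by enlarging $U$. Necessity is then immediate, since the inequality is derived solely from the IV assumptions.
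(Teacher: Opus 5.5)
Your proposal is correct. The paper gives no proof of this result. It cites Pearl (1995) and places the result inside the Balke--Pearl response-type framework, and your argument is the standard one, built on exactly that response-function reasoning.

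Your diagnosis of the display is also right. With $x'$ maximized inside the sum and $x$ absent, $\sum_y \max_{x'} P(Y=y,X=x'\mid Z=z) \le \sum_y P(Y=y\mid Z=z) = 1$ for every $z$. So the printed inequality holds for any conditional distribution and tests nothing. The intended statement, consistent with the phrase ``for each $x$,'' is $\sum_y \max_z P(Y=y,X=x\mid Z=z)\le 1$, and that is what you establish.

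One simplification: you can avoid the step you flag as delicate by working with kernels instead of deterministic response functions. The factorization $P(X,Y\mid Z,U)=P(X\mid Z,U)\,P(Y\mid X,U)$ (item~2 of Theorem~\ref{thm:structural-equiv}), together with $Z\perp U$, gives $P(Y=y,X=x\mid Z=z)=\int P(x\mid z,u)\,P(y\mid x,u)\,dP(u)\le \int P(y\mid x,u)\,dP(u)$ uniformly in $z$. The right-hand side sums to $1$ over $y$, so no enlargement of $U$ is needed.

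Finally, if $Z$ is not finite, replace ``let $z_y$ attain the maximum'' with a supremum. The bound holds for each $z$, so nothing else changes.
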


This is the \emph{causal Bell inequality}: a constraint on the observed distribution that must hold if the IV structure is valid. Violation certifies model misspecification, exactly as Bell violation certifies the absence of a local hidden variable.

\begin{theorem}[Structural equivalence; \cite{pearl1995testability, balke1997bounds}]\label{thm:structural-equiv}
For binary $X, Y, Z$, the following are equivalent:
\begin{enumerate}
\item The IV model holds.
\item A factorizable hidden-variable model $P(X,Y \mid Z,U) = P(X \mid Z,U)\,P(Y \mid X,U)$ exists.
\item Pearl's instrumental inequality holds for all $x$.
\end{enumerate}
Moreover, when these hold, the Balke--Pearl bounds are non-trivially informative (width $< 1$). For non-binary variables, item 3 is necessary but may not be sufficient for items 1--2; additional inequality constraints arise from higher-order marginal compatibility.
\end{theorem}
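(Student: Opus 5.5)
The plan is to pass through the response-function (canonical) representation, which turns all three items into statements about one explicit polytope. \textbf{(1)$\Leftrightarrow$(2).} Under Definition~\ref{def:iv}, the structural equations $X=f(Z,U)$ and $Y=g(X,U)$ with $Z\perp U$ give (2) immediately, since $P(X,Y\mid Z,U)=P(X\mid Z,U)\,P(Y\mid X,U)$. For the converse, I would absorb any stochasticity in the kernels $P(X\mid Z,U)$ and $P(Y\mid X,U)$ into an enlarged latent variable. Each unit $U$ then reduces to a pair of deterministic response functions $(r_X,r_Y)$ with $r_X:\{0,1\}\to\{0,1\}$ and $r_Y:\{0,1\}\to\{0,1\}$. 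This gives $4\times 4=16$ response types, the IV analogue of the 16 deterministic strategies in the CHSH polytope of Section~\ref{sec:chsh}. Consequently the set of observable tables $\{P(x,y\mid z)\}$ compatible with (1)--(2) is exactly the convex hull $\mathcal{P}_{IV}$ of the 16 deterministic tables. Each such table is determined by one response type $q\in\Delta^{15}$ via $P(x,y\mid z)=\sum_{r} q_r\,\mathbf 1\{r_X(z)=x,\ r_Y(x)=y\}$.

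\textbf{(2)$\Rightarrow$(3).} For fixed $x$ and $y$, each response type contributes to $P(Y=y,X=x\mid Z=z)$ only if $r_Y(x)=y$ and $r_X(z)=x$. Hence $\max_z P(y,x\mid z)\le \sum_r q_r\mathbf 1\{r_Y(x)=y\}$, since the right-hand side does not depend on $z$. Summing over $y$ gives $\sum_y\max_z P(y,x\mid z)\le 1$, which is Pearl's inequality (Theorem~\ref{thm:pearl-iv}). The same computation handles the form written in \eqref{eq:pearl-iv}, with the maximum over the treatment value taken inside. This direction is routine and works for arbitrary finite alphabets, which gives the "necessary" half of the non-binary remark.

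\textbf{(3)$\Rightarrow$(2), the main obstacle.} Here I must show that, in the binary case, the facets of $\mathcal{P}_{IV}$ consist only of the instrumental inequalities together with the trivial positivity and normalization constraints. The ambient space is the product of two simplices $\Delta^3\times\Delta^3$, which is 6-dimensional. My first attempt would be a direct Fourier--Motzkin elimination of the 16 variables $q_r$ from the 8 linear equalities $P(x,y\mid z)=\sum_r q_r(\cdot)$ and $q\ge 0$, exploiting the symmetries under relabeling $X$, $Y$, and $Z$. Equivalently, one can run a facet enumeration of the 16 vertices, as in the Balke--Pearl linear program. By symmetry, it should suffice to verify that every facet is an image of $P(y{=}0,x\mid 0)+P(y{=}1,x\mid 1)\le 1$ under these relabelings. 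A cleaner alternative I would try first is a constructive Fr\'echet-style argument in the spirit of Theorem~\ref{thm:fine}. For each $x$, Pearl's inequality says that the two sub-distributions $P(\cdot,x\mid z=0)$ and $P(\cdot,x\mid z=1)$ admit a common dominating measure of total mass at most 1. I would build the coupling over $r_Y(x)$ from these dominating measures, then glue the two couplings (for $x=0,1$) via a compliance-type split. The gluing step is where binarity should be essential.

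For the "moreover" clause, the plan is to write the Balke--Pearl upper and lower bounds as the two LPs over $\mathcal{P}_{IV}$ and show that their difference is strictly less than 1 whenever the observed table is not on the degenerate face where $Z$ has no effect on $X$. I expect this condition must be imposed explicitly: when $P(X\mid Z)$ does not depend on $Z$, the bounds reduce to Manski's width-1 bounds, so the claim needs the qualifier of a nontrivial first stage. Finally, for the non-binary caveat, I would exhibit a known extra facet for $|\mathcal{X}|\ge 3$ (following Bonet's enumeration). That example shows a table satisfying \eqref{eq:pearl-iv} which lies outside the response-function polytope, so item 3 is not sufficient in general.
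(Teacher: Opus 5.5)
The paper gives no proof of this statement. It only cites Pearl and Balke--Pearl, whose argument is the response-type polytope you set up, so your framework is the intended one. Your (1)$\Leftrightarrow$(2) reduction is fine. Note that item 2 must include $Z\perp U$, otherwise it is vacuous; you use this implicitly. Your (2)$\Rightarrow$(3) computation is correct for arbitrary alphabets.

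\textbf{The gap is (3)$\Rightarrow$(2).} You list two strategies and explicitly leave the gluing step open, but that step is the whole content of the equivalence. Your second strategy closes it in a few lines:
\begin{itemize}
\item Let $\pi_z=P(X=1\mid Z=z)$ and sort units into never-takers, always-takers, compliers and defiers.
\item Never-takers put one sub-distribution $\alpha$ of $Y(0)$ into both cells $P(\cdot,0\mid z)$, so $0\le\alpha\le P(\cdot,0\mid 0)\wedge P(\cdot,0\mid 1)$ pointwise. Likewise always-takers contribute $\beta\le P(\cdot,1\mid 0)\wedge P(\cdot,1\mid 1)$.
\item Compliers must then carry $P(\cdot,0\mid 0)-\alpha$ and $P(\cdot,1\mid 1)-\beta$. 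Defiers must carry $P(\cdot,0\mid 1)-\alpha$ and $P(\cdot,1\mid 0)-\beta$.
\item Within a type, $Y(0)$ and $Y(1)$ can be coupled arbitrarily. So the only remaining requirement is that each switcher type have equal mass in its two cells, and both requirements reduce to
\[
|\alpha|-|\beta|=1-\pi_0-\pi_1 .
\]
\item With $m_x=\sum_y\min_z P(y,x\mid z)$, $|\alpha|$ ranges over $[0,m_0]$ and $|\beta|$ over $[0,m_1]$. Hence this is solvable iff $-m_1\le 1-\pi_0-\pi_1\le m_0$.
\item Since $m_x=P(X=x\mid 0)+P(X=x\mid 1)-\sum_y\max_z P(y,x\mid z)$, these two inequalities are exactly Pearl's inequality for $x=0$ and $x=1$.
\end{itemize}
No facet enumeration is needed, and binarity of $Y$ is never used.

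\textbf{Consequence for your non-binary plan.} With binary $Z$ and any finite $\mathcal X$, use the same bookkeeping: always-$x$ parts $\sigma_x\le P(\cdot,x\mid 0)\wedge P(\cdot,x\mid 1)$, then a zero-diagonal transport plan between the residual $X$-marginals. By Hall's condition this is feasible iff $P(x\mid 0)+P(x\mid 1)-2|\sigma_x|\le 1-\sum_{x'}|\sigma_{x'}|$ for all $x$. At most one $x$ has $P(x\mid 0)+P(x\mid 1)>1$, and Pearl's inequality lets $\sigma_x$ absorb exactly that excess. So Pearl's inequality stays sufficient for all finite $|\mathcal X|,|\mathcal Y|$ whenever $Z$ is binary. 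Your search for an extra facet driven by $|\mathcal X|\ge 3$ alone will therefore fail. The standard extra facet (Bonet's inequality) needs a ternary instrument.

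\textbf{Two further corrections.}
\begin{enumerate}
\item Make item 3 explicit. The display \eqref{eq:pearl-iv} as printed, $\max_z\sum_y\max_{x'}P(y,x'\mid z)\le 1$, holds for every distribution, because $\sum_y\max_{x'}P(y,x'\mid z)\le\sum_{y,x'}P(y,x'\mid z)=1$. Under that reading (3)$\Rightarrow$(2) is false, and saying your computation ``handles'' that form hides this. Item 3 must be $\sum_y\max_z P(y,x\mid z)\le 1$ for each $x$, which is what your converse actually uses.
\item You are right that the width clause needs a qualifier, but your description of the degenerate case is wrong.
\begin{itemize}
\item A nonzero first stage suffices. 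The Balke--Pearl interval lies inside the natural IV bounds obtained by intersecting, over $z$, Manski's bounds on $E[Y(1)]$ and $E[Y(0)]$. Their width is at most $\min_z P(X=0\mid z)+\min_z P(X=1\mid z)=1-|\pi_1-\pi_0|$.
\item But $\pi_0=\pi_1$ does not force width $1$. Take half compliers with $Y(1)=1$ and half defiers with $Y(1)=0$: then $\pi_0=\pi_1=1/2$, yet $E[Y(1)]=1/2$ is point-identified.
\item The clean counterexample to the paper's unqualified claim is a table where $P(X,Y\mid Z)$ does not depend on $Z$. There the bounds reduce to Manski's width-$1$ interval.
\end{itemize}
\end{enumerate}
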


The parallel to Fine's theorem (Theorem~\ref{thm:fine}) is exact: IV validity $\leftrightarrow$ local realism; instrument $Z \leftrightarrow$ measurement setting; treatment $X \leftrightarrow$ particle $A$; outcome $Y \leftrightarrow$ particle $B$; confounder $U \leftrightarrow$ hidden variable $\lambda$.

\subsection{Balke--Pearl Bounds via Linear Programming}

Following \citet{balke1997bounds}, enumerate \emph{response types} for each unit. For binary $X, Y, Z$, there are 4 response types for $X$ given $Z$ and 4 for $Y$ given $X$, yielding 16 possible unit types $\{t_{ij}\}$ with probabilities $\{q_{ij}\}$.

The observed distribution constrains $q$ via linear equalities:
\begin{equation}\label{eq:response-type}
P(Y=y, X=x \mid Z=z) = \sum_{\substack{i,j:\, \text{type}(i,j) \\ \text{generates } (x,y \mid z)}} q_{ij}.
\end{equation}
The \emph{average causal effect} (ACE) is linear in $q$:
\begin{equation}\label{eq:ace}
\mathrm{ACE} = \sum_{i,j} c_{ij}\, q_{ij}.
\end{equation}

\begin{theorem}[Balke--Pearl, 1997]\label{thm:balke-pearl}
Tight lower and upper bounds on the ACE are
\[
[\mathrm{ACE}]_{\min} = \min_{q \geq 0,\, \mathbf{A}q = p} c^T q, \qquad [\mathrm{ACE}]_{\max} = \max_{q \geq 0,\, \mathbf{A}q = p} c^T q,
\]
where $\mathbf{A}$ is the constraint matrix mapping response-type probabilities to observed distributions, and $p$ is the observed distribution vector. The bounds are sharp and obtained by linear programming.
\end{theorem}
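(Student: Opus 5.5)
The plan is to reduce the statement to two facts: the response-type parametrization is exactly equivalent to the IV model at the level of observables, and LP duality (or compactness) turns the min and max into attained, sharp bounds.

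\textbf{Step 1: Canonical reduction of the IV model.} For binary $Z,X$, any structural model $X=f(Z,U)$, $Y=g(X,U)$ with $Z\perp U$ is determined, unit by unit, by the pair of functions $(f(\cdot,u), g(\cdot,u))$. There are $4$ maps $\{0,1\}\to\{0,1\}$ for each, giving $16$ types. A stochastic structural model (item 2 of Theorem~\ref{thm:structural-equiv}) can first be made deterministic by absorbing the randomness into an enlarged $U$; this is the same move as the implication (2)$\Rightarrow$(1) in Fine's theorem (Theorem~\ref{thm:fine}). So any IV model induces a distribution $q$ on the 16 types, given by pushing $U$ forward through $u\mapsto$ type$(u)$. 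Because $Z\perp U$, this $q$ does not depend on $z$. Conversely, any $q$ in the simplex can be realized by taking $U$ equal to the type itself.

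\textbf{Step 2: Observables and the ACE in terms of $q$.} Conditional on $Z=z$, a unit of type $(i,j)$ deterministically produces $x=f_i(z)$ and $y=g_j(x)$. This gives $P(x,y\mid z)=\sum q_{ij}\mathbf 1\{\cdot\}$, which is \eqref{eq:response-type}, i.e.\ $\mathbf A q=p$. Likewise $\mathrm{ACE}=P(Y(1)=1)-P(Y(0)=1)=\sum_{ij}q_{ij}\,(g_j(1)-g_j(0))$, which is \eqref{eq:ace} with $c_{ij}=g_j(1)-g_j(0)$. Together these show that the set of ACE values compatible with $p$ under the IV model is exactly $\{c^Tq: q\ge 0,\ \mathbf A q=p\}$. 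Here $\mathbf 1^Tq=1$ is implied by summing the rows for any fixed $z$.

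\textbf{Step 3: Attainment and sharpness.} The feasible set is a polytope, compact as a subset of the simplex. It is nonempty exactly when the model is compatible with $p$, which by Theorem~\ref{thm:structural-equiv} is equivalent to Pearl's inequality. A linear functional on a nonempty compact convex polytope attains its min and max at vertices, and the LP optimum is reached by simplex or duality. Sharpness then follows from Steps 1--2: every value in $[\min,\max]$ is realized by some $q$, by convexity of the image of a convex set under a linear map, and hence by a genuine IV model that reproduces $p$. No tighter bound is valid.

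\textbf{Main obstacle.} The delicate point is Step 1's claim that restricting to deterministic response types loses no generality. This means an arbitrary latent $U$, possibly continuous and with stochastic structural equations, must reduce to a finite mixture over types. Measurability of the push-forward is routine. The essential ingredient is that independence $Z\perp U$ survives the reduction, so the same $q$ serves all $z$. I would handle this with a functional representation: write each stochastic mechanism as a deterministic function of $(U,\varepsilon)$ with independent noise $\varepsilon$, then check that $(U,\varepsilon)$ remains independent of $Z$.
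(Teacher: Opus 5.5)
Your proof is correct and follows the same response-type linear-programming route the paper sets up around the theorem; the paper gives no proof of its own and defers to Balke--Pearl. Your Steps 1--3 supply exactly the justification it omits: reduction to the 16 deterministic types with a single $q$ for all $z$ because $Z\perp U$, compactness for attainment, and convexity of the image under $q\mapsto c^Tq$ for sharpness. One aside should be restated. LP feasibility is itself the compatibility condition, and its equivalence with Pearl's inequality holds only for the correct form $\max_x\sum_y\max_z P(Y=y,X=x\mid Z=z)\le 1$. It fails for \eqref{eq:pearl-iv} as printed, which every $p$ satisfies because $\sum_y\max_{x'}P(Y=y,X=x'\mid Z=z)\le 1$ trivially, so don't cite Theorem~\ref{thm:structural-equiv} for that remark; the theorem doesn't need it anyway.
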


The feasible set $\{q : \mathbf{A}q = p,\; q \geq 0\}$ is the \emph{response-function polytope}---the causal analog of the Fr\'{e}chet class. Its extreme points are the deterministic strategies, and its facets give the causal bounds.

\subsection{Tian--Pearl: Probabilities of Causation}

Beyond average effects, \citet{tian2000probabilities} bound individual-level causal quantities.

\begin{definition}\label{def:pns}
For binary treatment $X$ and outcome $Y$:
\begin{itemize}
\item \textbf{PN} (probability of necessity): $\mathrm{PN} = P(Y_{x'} = 0 \mid X=1, Y=1)$---the probability that treatment was necessary for the observed outcome.
\item \textbf{PS} (probability of sufficiency): $\mathrm{PS} = P(Y_x = 1 \mid X=0, Y=0)$---the probability that treatment would have been sufficient.
\item \textbf{PNS} (probability of necessity and sufficiency): $\mathrm{PNS} = P(Y_x = 1, Y_{x'} = 0)$.
\end{itemize}
\end{definition}

\begin{theorem}[Tian--Pearl, 2000]\label{thm:tian-pearl}
Under no unmeasured confounding, sharp bounds on PNS using both experimental data $P(y_x), P(y_{x'})$ and observational data $P(y \mid x)$ are:
\begin{align*}
\max\bigl\{0,\; P(y_x) - P(y_{x'}),\; P(y) - P(y_{x'}),\; P(y_x) - P(y)\bigr\} &\leq \mathrm{PNS} \\
\mathrm{PNS} &\leq \min\bigl\{P(y_x),\; P(y'_{x'}),\; P(x,y) + P(x',y')\bigr\}.
\end{align*}
The bounds are sharp: every combination of data satisfying these inequalities is achievable by some causal model.
\end{theorem}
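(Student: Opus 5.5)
We reduce the Tian--Pearl bounds to a small linear program over the four potential-outcome response types, in the style of Theorem~\ref{thm:balke-pearl}. For binary $X,Y$ each unit has a type given by the pair $(Y_x,Y_{x'})\in\{0,1\}^2$. Combining this with the realized treatment gives eight cells. Write
\[
q_{t,s}=P(Y_x=t,\,Y_{x'}=s,\,X=x),\qquad r_{t,s}=P(Y_x=t,\,Y_{x'}=s,\,X=x'),
\]
all nonnegative. The target is linear in these cells: $\mathrm{PNS}=q_{10}+r_{10}$.

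The data impose linear equalities. By consistency, the observational distribution gives
\begin{align*}
P(x,y)&=q_{10}+q_{11}, & P(x,y')&=q_{00}+q_{01},\\
P(x',y)&=r_{01}+r_{11}, & P(x',y')&=r_{00}+r_{10}.
\end{align*}
The experimental data give the marginals of the type variable, again by linear equalities:
\[
P(y_x)=\textstyle\sum_{s}(q_{1s}+r_{1s}),\qquad P(y_{x'})=\textstyle\sum_{t}(q_{t1}+r_{t1}).
\]
Sharp bounds on PNS are then the optimal values of $\min$ and $\max$ of $q_{10}+r_{10}$ over this polytope. By LP duality, each bound is the best of the linear combinations of the equality constraints that dominate the objective coefficientwise.

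First we verify validity of each listed term by an explicit elementary argument.
\begin{itemize}
\item $\mathrm{PNS}\le P(y_x)$ and $\mathrm{PNS}\le P(y'_{x'})$ hold because PNS is the probability of a subevent of each.
\item $\mathrm{PNS}\le P(x,y)+P(x',y')$ holds because the event $\{Y_x=1,Y_{x'}=0\}$ forces $Y=1$ when $X=x$ and $Y=0$ when $X=x'$.
\item $P(y_x)-P(y_{x'})\le\mathrm{PNS}$ follows from $P(y_x)-P(y_{x'})=P(10)-P(01)\le P(10)$, writing $P(ts)$ for $P(Y_x=t,Y_{x'}=s)$.
\item $P(y)-P(y_{x'})\le\mathrm{PNS}$ follows because $\{Y=1\}\subseteq\{Y_{x'}=1\}\cup\{Y_x=1,Y_{x'}=0,X=x\}$.
\item $P(y_x)-P(y)\le\mathrm{PNS}$ follows by the symmetric inclusion $\{Y_x=1\}\subseteq\{Y=1\}\cup\{Y_x=1,Y_{x'}=0,X=x'\}$.
\end{itemize}
Each of these is a Fr\'echet-type inequality in the sense of Theorem~\ref{thm:frechet}.

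For sharpness we show that the maximum of the lower expressions and the minimum of the upper expressions are both attained. We do this either by enumerating the vertices of the dual feasible region, which is a finite and mechanical computation (Balke--Pearl style), or constructively. In the constructive route we first fix the observational cells, and then allocate the experimental mass within the $X=x$ and $X=x'$ strata greedily. To push PNS up, we put as much mass as possible into type $10$. To push PNS down, we put mass into types $11$ and $00$ first. We then check that the resulting allocation hits the claimed expression in each case.

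The main obstacle is that sharpness requires the data to be mutually consistent: $P(x,y)\le P(y_x)\le P(x,y)+P(x')$, and the analogous condition for $x'$. These conditions are implied by any causal model, so we must assume them. Under them, the case analysis of which term is active must be shown to close. Note also that the ``no unmeasured confounding'' hypothesis would force $P(y_x)=P(y\mid x)$. With that equality imposed, the bounds reduce to the pure experimental Fr\'echet bounds, and the combined formula is sharp only in the general setting that allows confounding. We would therefore prove sharpness without the exogeneity restriction and note this caveat. The step we would attack first is the dual vertex enumeration of the eight-variable LP, which certifies that no further facets exist.
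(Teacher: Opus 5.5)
Your validity arguments for the six listed inequalities are all correct, and the eight-cell LP is the right formulation. The gap is in sharpness, which fails in exactly the setting you chose (consistency only, confounding allowed). There the upper bound $\min\{P(y_x),P(y'_{x'}),P(x,y)+P(x',y')\}$ is not attained in general, because a fourth inequality holds. Writing $\mathrm{PNS}=P(y_x)-P(y_{x'})+P(Y_x=0,Y_{x'}=1)$ and using $\{Y_x=0,Y_{x'}=1\}\subseteq\{X=x,Y=0\}\cup\{X=x',Y=1\}$ gives
\[
\mathrm{PNS}\le P(y_x)-P(y_{x'})+P(x,y')+P(x',y).
\]
For instance, take $P(x,y)=P(x',y')=\tfrac12$, $P(x,y')=P(x',y)=0$ and $P(y_x)=P(y_{x'})=\tfrac12$, which satisfies your consistency conditions. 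The experimental margins then force every $X=x$ unit to be of type $11$ and every $X=x'$ unit to be of type $00$. In your variables, $r_{10}+r_{11}=P(y_x)-P(x,y)=0$, and $q_{01}+q_{11}=P(y_{x'})-P(x',y)=\tfrac12$ with $q_{01}\le P(x,y')=0$. So $\mathrm{PNS}=0$ is point-identified, while the stated upper bound is $\tfrac12$.

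Your dual vertex enumeration would therefore not certify that ``no further facets exist''. It would produce exactly this facet, which is the fourth term of Tian and Pearl's actual combined-data upper bound; the stated lower bound is the correct sharp one. The greedy allocation cannot reach the three stated terms in every case for a concrete reason. Once $P(x',y)$ is fixed, the margin $P(y_{x'})$ dictates $q_{01}+q_{11}$ inside the $X=x$ stratum, and when this exceeds $P(x,y')$ the excess is forced into $q_{11}$, capping $q_{10}$. In fact, once the experimental data fix $r_{10}+r_{11}$ and $q_{01}+q_{11}$, the LP splits into two one-dimensional problems, giving
\[
\max\mathrm{PNS}=\min\{P(x,y),\,P(x,y)+P(x,y')+P(x',y)-P(y_{x'})\}+\min\{P(x',y'),\,P(y_x)-P(x,y)\}.
\]
Its four pairwise sums are the three stated terms plus the missing one. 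The same computation for the minimum returns the stated lower bound.

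You are right that exogeneity collapses everything to Fr\'{e}chet bounds, but you draw the reverse conclusion from it. Under the hypothesis actually stated, $X\perp(Y_x,Y_{x'})$, so $P(y_x)=P(y\mid x)$ and $P(y_{x'})=P(y\mid x')$. The extra terms are then dominated: $P(y)-P(y_{x'})=P(x)[P(y_x)-P(y_{x'})]$, $P(y_x)-P(y)=P(x')[P(y_x)-P(y_{x'})]$, and $P(x,y)+P(x',y')=P(x)P(y_x)+P(x')P(y'_{x'})\ge\min\{P(y_x),P(y'_{x'})\}$. The displayed interval is therefore exactly $[\max\{0,P(y_x)-P(y_{x'})\},\min\{P(y_x),P(y'_{x'})\}]$. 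Each point of it is attained by taking $X$ independent of a suitable coupling of $(Y_x,Y_{x'})$ (Theorem~\ref{thm:frechet}). That short argument proves the statement as literally written; the paper gives no proof of its own, only the citation. The non-degenerate version you set out to prove is true only once $P(y_x)-P(y_{x'})+P(x,y')+P(x',y)$ is added to the upper $\min$. With that correction your LP route goes through.
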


\subsection{Manski's Partial Identification}

\citet{manski1990nonparametric} arrived at the same polytope from the potential outcomes tradition.

\begin{theorem}[Manski, 1990]\label{thm:manski}
Under no assumptions beyond bounded outcomes $Y \in [0,1]$, the average treatment effect $\mathrm{ATE} = E[Y(1)] - E[Y(0)]$ satisfies:
\begin{align*}
&E[Y \mid X=1]\,P(X=1) + 0 \cdot P(X=0) \\
&\quad - \bigl[E[Y \mid X=0]\,P(X=0) + 1 \cdot P(X=1)\bigr] \\
&\leq \mathrm{ATE} \\
&\leq E[Y \mid X=1]\,P(X=1) + 1 \cdot P(X=0) \\
&\quad - \bigl[E[Y \mid X=0]\,P(X=0) + 0 \cdot P(X=1)\bigr].
\end{align*}
This interval has width 1 and always contains zero.
\end{theorem}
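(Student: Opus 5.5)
Write $\mathrm{ATE} = E[Y(1)] - E[Y(0)]$ and decompose each counterfactual mean by the law of total probability over the observed treatment:
\[
E[Y(1)] = E[Y(1)\mid X=1]\,P(X=1) + E[Y(1)\mid X=0]\,P(X=0),
\]
and similarly for $E[Y(0)]$. By consistency, $Y = Y(x)$ on the event $\{X=x\}$, so $E[Y(1)\mid X=1] = E[Y\mid X=1]$ and $E[Y(0)\mid X=0] = E[Y\mid X=0]$ are identified from the data. The remaining two terms, $E[Y(1)\mid X=0]$ and $E[Y(0)\mid X=1]$, are the unobserved counterfactual means. Without further assumptions, the only information about them is that $Y(x)\in[0,1]$, so each lies in $[0,1]$.

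Substituting the worst case gives the bounds. For the lower bound, set $E[Y(1)\mid X=0]=0$ and $E[Y(0)\mid X=1]=1$. For the upper bound, set $E[Y(1)\mid X=0]=1$ and $E[Y(0)\mid X=1]=0$. This reproduces exactly the two displayed expressions.

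The bounds are sharp. I would exhibit joint distributions of $(X, Y(0), Y(1))$ consistent with the observed law of $(X,Y)$ that attain each endpoint. For the lower endpoint, let $Y(1)\equiv 0$ on $\{X=0\}$ and $Y(0)\equiv 1$ on $\{X=1\}$, keeping the observed conditional law on the factual arm. For the upper endpoint, use the opposite choice. The observed distribution only constrains the factual arms, so these constructions are admissible. Any intermediate value is attained by mixing the two constructions, since the feasible set is convex, as in the polytope picture of Theorem~\ref{thm:frechet}.

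Finally, subtracting the lower bound from the upper bound gives width
\[
P(X=0)\cdot(1-0) + P(X=1)\cdot(1-0) = 1.
\]
For the claim that the interval contains zero, set $a = E[Y\mid X=1]P(X=1) - E[Y\mid X=0]P(X=0)$. The bounds are then $a - P(X=1)$ and $a + P(X=0)$. Zero lies between them iff
\[
-P(X=0) \le a \le P(X=1).
\]
This holds because $E[Y\mid X=1]P(X=1)\in[0,P(X=1)]$ and $E[Y\mid X=0]P(X=0)\in[0,P(X=0)]$.

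The only step needing care is sharpness: the constructions must respect the observed joint of $(X,Y)$ while being unconstrained on the counterfactual arms. Everything else is bookkeeping.
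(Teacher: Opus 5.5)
The paper gives no proof of this result; it states it with a citation to Manski (1990). There is therefore no in-paper argument to compare against.

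Your proposal is the standard worst-case imputation argument, essentially Manski's own, and it is correct. You decompose $E[Y(1)]$ and $E[Y(0)]$ over $\{X=1\}$ and $\{X=0\}$ and identify the factual arms by consistency. You then replace the two counterfactual means $E[Y(1)\mid X=0]$ and $E[Y(0)\mid X=1]$ by their extreme values in $[0,1]$. These can be chosen independently because they live on disjoint events. This reproduces both displayed endpoints exactly, and your width-$1$ and contains-zero computations are right.

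Two small remarks. First, sharpness goes beyond what the statement asserts, which is only validity, width $1$, and that $0$ lies inside. Your construction for it is correct. The mixing step is best justified directly: the map from the law of $(X,Y(0),Y(1))$ to the law of $(X,Y(X))$ is linear, and $\mathrm{ATE}$ is linear in that law, so the identified set is an interval. This does not need the Fr\'echet coupling picture of Theorem~\ref{thm:frechet}, since nothing here involves coupling $Y(0)$ with $Y(1)$. Second, if $P(X=x)=0$ for some $x$, read $E[Y\mid X=x]\,P(X=x)$ as $E[Y\,1\{X=x\}]=0$, and the argument goes through in that degenerate case too.
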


Under the IV monotonicity assumption, the Balke--Pearl, Manski, and Heckman--Vytlacil bounds coincide---all characterize the same Fr\'{e}chet polytope under the same structural constraints \cite{heckman2005structural}. The causal inference literature has developed a family of monotonicity conditions---monotone treatment response, monotone treatment selection, and monotone instrumental variables \cite{manski2000monotone}---each of which cuts a face from the feasibility polytope and tightens the bounds. These are the causal analogs of closing loopholes in Bell experiments \cite{larsson2004bell}: each structural assumption (monotonicity, exclusion restriction, independence) removes part of the feasible polytope, just as each physical assumption (locality, fair sampling, freedom of choice) tightens the bound on achievable classical correlations.

\section{Kolmogorov Complexity and Entropic Bell Inequalities}\label{sec:kolmogorov}

The polytope framework of the preceding sections is formulated in the language of probability distributions. Kolmogorov complexity provides an alternative, algorithmic formulation that makes no reference to probability, applies to individual strings, and illuminates Bell violation from the perspective of computability and information compression. This section is independent of Sections~\ref{sec:quantum} and~\ref{sec:qbc}; it offers parallel insight into the same phenomenon rather than a logical foundation for quantum speedups.

\subsection{The Algorithmic Lens}

\begin{definition}\label{def:kc}
The Kolmogorov complexity $K(x)$ of a binary string $x$ is the length of the shortest program $p$ on a universal Turing machine $U$ such that $U(p) = x$. The conditional complexity $K(x \mid y)$ is the shortest program producing $x$ given $y$ as input.
\end{definition}

The fundamental link between Kolmogorov complexity and Shannon entropy is that for any random variable $X$ drawn from a computable distribution $P$, the expected prefix-free complexity satisfies $E[K(X)] = H(X) + O(1)$, where the $O(1)$ term depends on the choice of universal machine. (This equality requires the prefix-free variant of $K$; for the plain complexity of Definition~\ref{def:kc}, only the upper bound $E[K(X)] \leq H(X) + O(1)$ holds.) Kolmogorov complexity is thus the \emph{individual} version of Shannon entropy, defined without reference to a probability measure.

\subsection{The Algorithmic Bell Inequality}

\begin{theorem}[Algorithmic Bell; \cite{berthiaume2001quantum, vitanyi2000quantum}]\label{thm:algo-bell}
Under local realism (hidden variable $\lambda$):
\begin{equation}\label{eq:kc-bell}
K(AB \mid XY\lambda) \leq K(A \mid X\lambda) + K(B \mid Y\lambda) + O(\log K(\lambda)).
\end{equation}
Quantum entanglement violates this: for maximally entangled qubits, the joint outcomes $AB$ have higher conditional Kolmogorov complexity given any classical seed $\lambda$ than any locally-computable approximation permits.
\end{theorem}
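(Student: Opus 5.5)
The plan has two parts. First I prove the inequality \eqref{eq:kc-bell} under local realism. Then I exhibit the quantum violation, which means giving a precise meaning to the informal second sentence of the statement.

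For the first part I use the defining feature of a deterministic local hidden-variable model (item 1 of Theorem~\ref{thm:fine}). There are fixed functions $f,g$ with $A_i = f(X_i,\lambda)$ and $B_i = g(Y_i,\lambda)$ for each run $i$, where $A,B,X,Y$ denote the strings of $n$ runs.

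The basic tool is the subadditivity of Kolmogorov complexity up to logarithmic terms. A program that outputs $AB$ given $(X,Y,\lambda)$ can be built by concatenating three pieces: a shortest program for $A$ given $(X,\lambda)$, a shortest program for $B$ given $(Y,\lambda)$, and a self-delimiting encoding of the length of the first program so the two can be parsed apart. This gives
\[
K(AB\mid XY\lambda)\le K(A\mid X\lambda)+K(B\mid Y\lambda)+O(\log K(A\mid X\lambda)).
\]
The remaining task is to bound the overhead by $O(\log K(\lambda))$.

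Under locality, $A$ is computed from $(X,\lambda)$ by the fixed map $f$. So $K(A\mid X\lambda)\le K(f)+O(1)$, and this is $O(1)$ once the model is fixed. The overhead is then dominated by the cost of delimiting the seed $\lambda$ itself, which is reused by both parties. With the prefix-free variant of $K$ (as in the remark after Definition~\ref{def:kc}), that cost is at most $O(\log K(\lambda))$. The key point is that locality means the joint program needs no information beyond the two marginal programs plus shared access to $\lambda$; the prefix-free convention then removes the delimiter cost up to the stated term.

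For the violation I follow the strategy of \cite{vitanyi2000quantum,berthiaume2001quantum}, and I state it as a counting argument rather than the informal claim. Take settings $X,Y$ drawn incompressibly, with $K(XY)\ge 2n-O(\log n)$, and consider CHSH outcomes on maximally entangled pairs. By Theorem~\ref{thm:fine} and the Tsirelson value, the quantum joint outcomes have a CHSH winning frequency of about $\cos^2(\pi/8)$. Any locally computable $(f,g)$ with a seed $\lambda$ of complexity $k$ wins in at most a $3/4$ fraction of runs. This holds except on a set of settings strings which, by a Chernoff-type bound, has size $2^{2n-\Omega(n)}$, and which can therefore be described in fewer than $2n-\Omega(n)+O(k)$ bits. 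Incompressible settings therefore lie outside this set whenever $k=o(n)$, so no low-complexity $\lambda$ reproduces the quantum outcome pattern. Equivalently, a description of $AB$ given $XY\lambda$ must exceed the right-hand side of \eqref{eq:kc-bell} by $\Omega(n)$ for such $\lambda$.

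I expect the main obstacle to be this last step. The statement as written quantifies over ``any classical seed,'' but the inequality is trivially satisfied when $\lambda$ encodes $AB$ outright. The proof must therefore restrict to seeds with $K(\lambda)=o(n)$, or choose $\lambda$ independently of the settings (freedom of choice), and make the Chernoff-to-compression conversion rigorous. My first approach would be to formalize freedom of choice as $K(XY\mid\lambda)\ge 2n-O(\log n)$ and carry out the counting argument under that hypothesis.
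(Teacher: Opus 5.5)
The paper gives no proof of this theorem beyond the two citations, so your proposal has to stand on its own. Its second half has a genuine gap.

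Your concatenation argument in the first half never uses locality. Combined with $K(A\mid XY\lambda)\le K(A\mid X\lambda)+O(1)$, it shows that for the prefix-free $K$,
\[
K(AB\mid XY\lambda)\le K(A\mid X\lambda)+K(B\mid Y\lambda)+O(1)
\]
holds for \emph{every} tuple of strings. With plain $K$ the overhead is still at most $O(\log n)$.

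So \eqref{eq:kc-bell} cannot be violated by any data, quantum or classical. Your concluding claim, that a description of $AB$ given $XY\lambda$ must exceed the right-hand side by $\Omega(n)$, is therefore false. The problem is not only seeds that encode $AB$ outright; the inequality is automatic for all seeds.

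What your counting argument actually proves is that the \emph{premise} fails. Suppose the settings satisfy $K(XY\mid\lambda)\ge 2n-o(n)$. Then no computable round-wise $f,g$ with $A=f(X,\lambda)$, $B=g(Y,\lambda)$ can reach a CHSH win frequency above $3/4+\delta$. The quantum record sits near $\cos^2(\pi/8)$, so no settings-independent $\lambda$ makes $A$ and $B$ locally computable. That is the correct algorithmic Bell statement, and the theorem's second sentence has to be replaced by it. The violation lives in the hypothesis ``$K(A\mid X\lambda)$ and $K(B\mid Y\lambda)$ are small for a settings-independent $\lambda$,'' not in \eqref{eq:kc-bell}.

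Within that reformulation, three further points need repair.

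\emph{Drop the $K(\lambda)=o(n)$ route.} Alice's quantum outcomes are fair coins independent of the settings, so $K(AB\mid XY\lambda)\ge n-o(n)$ for every short $\lambda$. No short seed reproduces the record even with a \emph{nonlocal} generator, so as a statement about the quantum record this version proves nothing Bell-specific. Freedom of choice with $\lambda$ of arbitrary length, which you list as a fallback, is the hypothesis that makes the theorem nontrivial.

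\emph{Your Chernoff step needs round-wise locality.} It is valid for $a_i=f(i,x_i,\lambda)$, $b_i=g(i,y_i,\lambda)$. If locality is stated in Kolmogorov terms, $K(A\mid X\lambda)+K(B\mid Y\lambda)=o(n)$, then each $a_i$ may depend on all of $X$ and the rounds are no longer independent. Linearity still gives expected wins at most $3n/4$. But Markov's inequality then bounds the bad set only by a constant fraction of $\{0,1\}^{2n}$, which yields no $\Omega(n)$ compression. You need a concentration bound for the $n$-fold parallel-repeated CHSH game.

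\emph{The first-half bookkeeping is off.} $\lambda$ sits in the condition and needs no delimiter, so the overhead is $O(1)$ for prefix-free $K$. Also, $K(A\mid X\lambda)=O(1)$ requires $f$ to be computable, which is an assumption beyond physical local realism.
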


This is the algorithmic version of Bell's theorem. A local hidden variable model must compress the joint measurement record $(AB)$ into the sum of locally-computable parts. Entanglement means no such decomposition exists.

\noindent\textbf{Remark (QKD security).}
Quantum key distribution is secure precisely because the joint outcomes of a Bell-violating experiment cannot be efficiently compressed by any eavesdropper holding a classical program for $\lambda$. Bell violation certifies that no local hidden variable (and hence no eavesdropper's side information) can reproduce the observed correlations.

\subsection{The Kakeya Connection}

\citet{lutz2017algorithmic} proved the 2D Kakeya conjecture using conditional Kolmogorov complexity: every plane set containing a unit line segment in every direction has Hausdorff dimension 2. The proof exploits the \emph{point-to-set principle}: if every point $x$ in a set $E$ has high conditional complexity $K(x \mid d) \approx 2$, where $d$ is the direction of the line segment through $x$, then $E$ has full dimension.

The structural parallel to Bell is suggestive: a set with a unit line in every direction cannot have low conditional complexity---it cannot be ``locally explained'' by the direction alone, just as Bell-violating outcomes cannot be locally explained by the hidden variable $\lambda$. The analogy is formal in the sense that both results involve lower bounds on conditional information given a parameter, though the mathematical settings differ.

\begin{theorem}[Lutz--Lutz, 2017]\label{thm:kakeya}
If $E \subseteq \mathbb{R}^2$ contains a unit line segment in every direction $\theta \in [0,\pi)$, then for every $x \in E$:
\[
K(x \mid \theta) \geq 2 - \epsilon
\]
up to an additive constant. Hence $\dim_H(E) = 2$.
\end{theorem}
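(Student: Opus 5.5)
The plan is to follow the Lutz--Lutz route through the \emph{point-to-set principle}:
\[
\dim_H(E) = \min_{A \subseteq \mathbb{N}} \sup_{x \in E} \dim^A(x), \qquad \dim^A(x) = \liminf_{r\to\infty} \frac{K^A_r(x)}{r},
\]
where $K_r$ is the complexity of $x$ at precision $2^{-r}$. First I will read the pointwise inequality in the statement in its normalized, relativized form: for every oracle $A$ there is \emph{some} $x \in E$ with $\dim^{A,\theta}(x) \geq 1$ and $\dim^A(x) \geq 2-\epsilon$. The literal ``for every $x \in E$'' cannot hold, since $E$ may contain computable points. The Hausdorff conclusion then follows from the principle: every oracle admits such a point, so the minimum over $A$ is at least $2-\epsilon$, and $\epsilon$ is arbitrary. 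The upper bound $\dim_H(E)\le 2$ is trivial.

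Fix an oracle $A$ and choose the direction $\theta$ algorithmically random relative to $A$, so $K^A_r(\theta) \geq r - O(\log r)$. Write the segment in direction $\theta$ as $\{(t, at+b)\}$, with slope $a = a(\theta)$ and some offset $b$. We may take $b$ to be arbitrary, and possibly computationally useful to an adversary, so $A$ is allowed to encode it. Pick a parameter $t$ in the segment's domain that is random relative to $(A,\theta,b)$, and set $x = (t, at+b)$. This point satisfies $K^{A}_r(x \mid \theta) \geq K^{A}_r(t \mid a,b) - O(\log r) \geq r - O(\log r)$, which is the normalized form of the conditional bound in the statement.

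The key step is the lower bound $K^A_r(x) \geq 2r - \epsilon r$. By symmetry of information,
\[
K^A_r(x) \ \geq\ K^A_r(x,a,b) - K^A_r(a,b \mid x) - O(\log r).
\]
We have $K^A_r(x,a,b) \gtrsim K^A_r(a) + K^A_r(t\mid a,b) \gtrsim 2r$, since the slope $a$ carries the randomness of $\theta$, the offset $b$ is free to the oracle, and $t$ contributes $r$. So it suffices to show that $K^A_r(a,b \mid x)$ is $o(r)$ plus a contribution controlled by the offset. This is the main obstacle, a ``line-recovery'' lemma: given $x$ to precision $r$, one must recover $(a,b)$ cheaply.

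To prove the lemma I would use an enumeration argument. Search over lines $(u,v)$ through $x$ with $K^A_r(u,v) \leq K^A_r(a,b)$. If any such line satisfies $|(u,v)-(a,b)| \approx 2^{-m}$, then the two lines meet at $x$ at angle $2^{-m}$. That lets one compute $t$ to precision about $r-m$ from $(u,v)$ together with $m$ bits of $(a,b)$. The result is a complexity deficit in $t$, which contradicts the randomness of $t$ relative to $(A,a,b)$ unless $m$ is small. Carrying the $O(\log r)$ slack and the $\epsilon$ through a multiscale (``precision-by-precision'') version of this argument, rather than a single precision, is where I expect the real technical work to lie. After that, the remaining steps are bookkeeping.
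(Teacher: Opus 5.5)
The paper gives no proof of this statement beyond the citation, and your repair of the statement is the right reading. The pointwise bound cannot hold literally; what the point-to-set principle requires is that every oracle $A$ admit some $x\in E$ with $\dim^A(x)\ge 2-\epsilon$. Your choice of $\theta$ and $t$ is also right.

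The gap is the line-recovery step, and it is not bookkeeping. You need $K^A_r(a,b\mid x)\le K^A_r(a,b)-(1-\epsilon)r+O(\log r)$, i.e.\ about $K^A_r(b\mid a)+\epsilon r$. Enumerating lines through $x$ with $K^A_r(u,v)\le K^A_r(a,b)$ cannot deliver this. The correct intersection estimate for a rival line $(u,v)$ through $x$ at distance $2^{-m}$ from $(a,b)$ is
\[
K^A_r(u,v)\;\ge\;K^A_m(a,b)+K^A_{r-m,r}(t\mid a,b)-O(\log r)\;\approx\;K^A_m(a,b)+(r-m).
\]
Here locating $t$ to precision $r-m$ needs $(a,b)$ to precision $r$, not $m$ bits; the $m$ bits enter through $K_m(u,v)\approx K_m(a,b)$. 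So a rival is excluded only when $K^A_r(a,b)-K^A_m(a,b)<r-m$, and the rivals that must be excluded are those with $m$ well below $r$, not small-$m$ ones being harmless.

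The adversarial offset defeats this criterion at every scale. Suppose $K^A_s(a,b)\approx(1+\beta)s$ for all $s$, with $0<\beta<1$. Then $K^A_r(a,b)-K^A_m(a,b)\approx(1+\beta)(r-m)\ge r-m$ for every $m$, so the criterion excludes nothing. Counting all lines through $x$ then gives only $K^A_r(a,b\mid x)\lesssim r$, well above the $(\beta+\epsilon)r$ you can afford.

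Declaring $b$ ``free to the oracle'' does not rescue this. $A$ is fixed before $\theta$ is chosen, and $b=b(\theta)$ may encode $\theta$ (e.g.\ $b=\theta$), in which case $\theta$ is computable from $A$ together with $b$. You cannot have both $K^A_r(a)\approx r$ and $b$ available to $A$.

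The missing ingredient is a precision-dependent complexity-lowering oracle, the device N.~Lutz and Stull use for points on lines. For each $r$, take an oracle $D_r$, essentially computable from $(a,b)$, with two properties:
\[
K^{A,D_r}_s(a,b)\approx\min\{K^A_s(a,b),(1-\epsilon)r\}\ \text{for all } s\le r,
\qquad
K^{A,D_r}_{r-m,r}(t\mid a,b)\approx K^A_{r-m,r}(t\mid a,b).
\]
Relative to $(A,D_r)$, enumerate lines through $x$ of complexity at most $(1-\epsilon)r$. Since $K^A_m(a,b)\ge K^A_m(a)\approx m$, a rival at distance $2^{-m}$ now forces $(1-\epsilon)r\ge\min\{m,(1-\epsilon)r\}+(r-m)-O(\log r)$. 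This is possible only when $m\ge r-O(\log r)$. Hence $K^{A,D_r}_r(a,b\mid x)=O(\log r)$, and
\[
K^A_r(x)\;\ge\;K^{A,D_r}_r(x)-O(1)\;\ge\;(1-\epsilon)r+r-O(\log r).
\]
The roughly $K^A_r(b\mid a)+\epsilon r$ bits that $D_r$ supplies are exactly the ``contribution controlled by the offset'' you anticipated. The real multiscale work is constructing $D_r$ so that it caps the complexity of $(a,b)$ at precision $r$ without lowering it at smaller precisions.
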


\noindent\textbf{Analogy (not a formal connection):} The structural parallel to Bell is suggestive: ``every direction'' resembles ``every measurement setting''; ``locally explained by direction'' resembles ``locally explained by $\lambda$''; ``information content remains high despite constraints'' resembles ``outcomes remain incompressible despite hidden variables.'' However, the Kakeya result and Bell's theorem involve fundamentally different objects (geometric sets vs. probability distributions) and different notions of complexity (Hausdorff dimension vs. correlation polytopes). The analogy illuminates the principle of non-factorizability but should not be read as a formal equivalence.

\subsection{Entropic Bell Inequalities}

\begin{definition}[Shannon cone]\label{def:shannon-cone}
For $n$ random variables $X_1, \ldots, X_n$, the \emph{Shannon cone} $\Gamma^*_n$ is the closure of the set of all achievable entropy vectors $\mathbf{h} = (H(X_S))_{S \subseteq [n]}$ under classical joint distributions.
\end{definition}

\begin{theorem}[\cite{chaves2014causal}]\label{thm:entropic-bell}
If a Bell experiment is locally realistic, its entropy vector lies in $\Gamma^*_n$ restricted by the causal structure of the Bell DAG. Quantum systems can produce entropy vectors outside this cone.
\end{theorem}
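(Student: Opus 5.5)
The statement has two parts, and I would prove them separately. First, every local-realist entropy vector lies in the Shannon cone cut down by the conditional-independence constraints of the Bell DAG. Second, some quantum system produces an entropy vector outside that cone.

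\textbf{The containment.} I would use the variables of the Bell DAG: settings $X,Y$, outcomes $A,B$, and hidden variable $\lambda$. Local realism (item 2 of Theorem~\ref{thm:fine}) means there is a single classical joint distribution $P(x,y,a,b,\lambda)$ that factorizes along the DAG:
\[
P = P(\lambda)\,P(x)\,P(y)\,P(a\mid x,\lambda)\,P(b\mid y,\lambda).
\]
Because this joint is a genuine classical distribution, its entropy vector $(H(V_S))_{S}$ is achievable. By Definition~\ref{def:shannon-cone} it therefore lies in $\Gamma^*_n$. The factorization also supplies linear constraints on this vector:
\begin{itemize}
\item $I(X;Y\lambda)=0$ and $I(Y;X\lambda)=0$, from independence of the settings and the hidden variable;
\item $I(A;YB\mid X\lambda)=0$ and $I(B;XA\mid Y\lambda)=0$, from the local Markov property.
\end{itemize}
Each constraint is a linear equation in the coordinates of the entropy vector. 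The local-realist entropy vectors thus lie in the intersection of $\Gamma^*_n$ with a linear subspace. Projecting out every coordinate involving $\lambda$ (Fourier--Motzkin elimination) gives the cone on observed variables, and projection preserves the containment. This is the whole first half; it only needs the Markov factorization of a DAG with a classical latent variable.

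\textbf{The separation.} This is the main obstacle. Quantum correlations do not come with a joint entropy vector over $(A,A',B,B')$, so ``outside the cone'' has to be read for the observable marginal entropies. One route is the Braunstein--Caves chain inequality. Under local realism, Fine's theorem (Theorem~\ref{thm:fine}) provides a joint distribution of $(A,A',B,B')$. The chain rule together with subadditivity, both valid in $\Gamma^*_4$, then gives
\[
H(AB') \le H(AB)+H(A'B)+H(A'B').
\]
Here $H(AB)$ denotes the conditional entropy $H(A\mid B)$-type quantities in Braunstein and Caves's form. I would exhibit a violation of this inequality using a maximally entangled qubit pair, measured at coplanar angles separated by $\pi/(2n)$ in the $n$-setting chain. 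Each pair of outcomes then agrees with probability $\cos^2(\pi/4n)$. Computing the binary entropies, the right-hand side is about $3h(\cos^2(\pi/4n))$, which for large $n$ is smaller than the left-hand side, which is close to $1$. With more settings the violation appears asymptotically.

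For the $2\times2$ case I expect to need noisier states or a different Shannon-type inequality, such as the Chaves entropic inequality for the causal structure. The hardest step is choosing settings that give a certified numerical violation. I would first try the chain with $n=4$ settings per party and check it numerically, and fall back on asymptotic $n$ if needed.
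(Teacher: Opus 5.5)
The paper gives no proof of this statement; it is quoted from Chaves et al.\ and the appendix does not treat it. So your argument has to stand on its own, and it has a genuine gap: the two halves are about different cones.

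In the first half the settings $X,Y$ are nodes of the DAG. What you place in the cone is the entropy vector of $(X,Y,A,B)$ after eliminating $\lambda$. In the second half you violate the Braunstein--Caves inequality, which constrains the \emph{post-selected} entropies $H(A_x)$, $H(B_y)$, $H(A_xB_y)$ of the conditional laws $P(a,b\mid x,y)$. Those are not coordinates of your first vector; only their setting-average $H(AB\mid XY)$ appears there. Nothing you prove shows that Braunstein--Caves is valid on the cone you built.

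This cannot be patched within your first formulation. Weilenmann and Colbeck (Phys.\ Rev.\ A, 2016) showed that for the bipartite Bell DAG with the settings as observed nodes, no function of the observed entropies certifies non-classicality. Every quantum entropy vector of $(X,Y,A,B)$ is classically attainable, so the separation is simply false for that cone.

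Both halves must therefore be run in the post-selection (marginal-scenario) formulation, which is the standard setting for entropic Bell inequalities. By item~3 of Theorem~\ref{thm:fine}, local realism gives a joint law of $(A,A',B,B')$ whose pair marginals are the observed $P(\cdot,\cdot\mid x,y)$. Hence $(H(A_x),H(B_y),H(A_xB_y))_{x,y}$ is a coordinate projection of a point of $\Gamma^*_4$. On that projection the chain rule and monotonicity give $H(A\mid B')\le H(AA'B\mid B')\le H(A'\mid B')+H(B\mid A')+H(A\mid B)$. That projected cone is the one your quantum example must leave. A smaller point: Fourier--Motzkin elimination applies to the polyhedral Shannon cone, not to $\Gamma^*_n$, which is not polyhedral for $n\ge 4$. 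Projection still preserves containment, so this is cosmetic.

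The quantum example also fails as written.

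\emph{The inequality.} In the joint-entropy form you first wrote, the inequality can never be violated. Subadditivity, monotonicity and no-signalling alone give $H(AB')\le H(A)+H(B')\le H(AB)+H(A'B')$. Only the conditional form above is meaningful.

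\emph{The angles.} An $n$-setting chain has $2n-1$ adjacent gaps. Gaps of $\pi/(2n)$ make the end-to-end angle $\pi-\pi/(2n)$, and the left side $h(\sin^2(\pi/(4n)))$ then equals a single right-hand term, so there is no violation. The end-to-end angle should be $\pi/2$, i.e.\ gaps $\pi/(2(2n-1))$. That gives $1$ against $(2n-1)\,h(\sin^2(\pi/(4(2n-1))))\to 0$. Also, the count $3h(\cdot)$ belongs to the two-setting inequality, not to the chain.

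\emph{The $2\times 2$ case.} No noise is needed. Take $|\Phi^+\rangle$ with measurement directions for $A,B,A',B'$ in the $x$--$z$ plane of the Bloch sphere at angles $0,\theta,2\theta,3\theta$. The marginals are uniform and $P(\text{outcomes agree})=\cos^2(\Delta/2)$ at relative angle $\Delta$. So the two sides are $h(\sin^2(3\theta/2))$ and $3h(\sin^2(\theta/2))$. At $\theta=0.3$ these are about $0.700$ and $0.463$, a violation of roughly $0.24$ bits.
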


The entropic constraints derived from the Bell DAG take the general form of linear inequalities on the entropy vector. For example, the CHSH-type entropic inequality (a consequence of the submodularity of Shannon entropy applied to the Bell causal structure; see \cite{chaves2014causal}, Section~III) is:
\begin{equation}\label{eq:entropic-chsh}
I(A:B) + I(A:B') + I(A':B) - I(A':B') \leq 2\,H(\text{settings}).
\end{equation}
Quantum systems can violate entropic constraints of this type by exploiting negative conditional von Neumann entropy, a signature of entanglement with no classical causal explanation.

\subsection{The Shannon--Kolmogorov--Quantum Arc}

The three levels of information theory form a hierarchy, presented in Table~\ref{tab:info-arc}.

\begin{table}[H]
\centering
\caption{The three levels of information theory and their Bell connections.}\label{tab:info-arc}
\small
\begin{tabular}{@{}lll@{\hspace{6pt}}l@{}}
\toprule
Level & Description & Quantity & Bell connection \\
\midrule
Shannon & Probabilistic & $H(X) = -\!\sum p(x)\log p(x)$ & Entropic Bell ineq. \\
Kolmogorov & Algorithmic & $K(x)$, shortest program & Algorithmic Bell thm. \\
von Neumann & Quantum & $S(\rho) = -\tr(\rho \log \rho)$ & Quantum entropic ineq. \\
\bottomrule
\end{tabular}
\end{table}

The classical Brudno theorem \cite{brudno1983entropy} connects Shannon and Kolmogorov levels (the entropy rate of an ergodic source equals the asymptotic Kolmogorov complexity rate); the quantum Brudno theorem \cite{benatti2006quantum} connects Kolmogorov and quantum levels. The Photons = Tokens framework \cite{litowitz2026photons} connects the physical cost at the quantum level to the economic cost at the Shannon level via Landauer's principle.

\section{Quantum Bell Inequalities}\label{sec:quantum}

Sections~\ref{sec:polytope}--\ref{sec:kolmogorov} characterize the classical boundary---the set of correlations achievable by any joint probability distribution. Quantum mechanics exceeds this boundary. The physical mechanism is non-commutativity: quantum observables are operators on a Hilbert space, and when two observables do not commute, they cannot be simultaneously diagonalized, and no joint probability distribution over their outcomes exists. This is precisely the condition that Fine's theorem identifies as equivalent to Bell violation.

This section develops the operator-algebraic extension: the Tsirelson bound, the NPA hierarchy for computational approximation, and the MIP$^*$=RE theorem establishing the fundamental undecidability of the quantum boundary. From the causal inference perspective, the quantum extension answers the question: if the latent confounder $U$ in an IV model were a quantum system rather than a classical random variable, how much wider could the range of achievable correlations be?

\subsection{The Tsirelson Bound}

\begin{theorem}[Tsirelson, 1980]\label{thm:tsirelson}
The maximum quantum violation of the CHSH inequality is $2\sqrt{2}$:
\[
\sup_{\rho,\, A,\, A',\, B,\, B'} \bigl|\langle AB \rangle + \langle AB' \rangle + \langle A'B \rangle - \langle A'B' \rangle\bigr| = 2\sqrt{2},
\]
where the supremum is over all quantum states $\rho$ and measurement operators with eigenvalues $\pm 1$.
\end{theorem}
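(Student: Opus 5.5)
The proof has two halves: an upper bound valid for every state and every choice of $\pm1$-valued observables, and an explicit construction that attains it. For the upper bound I will work in the commuting-operator form. Alice's observables $A, A'$ are self-adjoint with $A^2 = A'^2 = I$, Bob's observables $B, B'$ satisfy the same relations, and every Alice operator commutes with every Bob operator. (In the tensor-product setting this is $A\otimes I$ against $I\otimes B$.) Any measurement with eigenvalues $\pm1$ can be written as such an involution; a general $\pm1$-valued POVM can first be dilated to a projective one by Naimark's theorem. Define the CHSH operator
\[
\mathcal{C} = A(B+B') + A'(B-B').
\]
By linearity of expectation the quantity in the statement is $\tr(\rho\,\mathcal{C})$, so it suffices to show $\|\mathcal{C}\| \le 2\sqrt{2}$. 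Then $|\tr(\rho\,\mathcal{C})| \le \|\mathcal{C}\|$ holds for every density matrix $\rho$.

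The key step is the squaring identity (Landau's). Expand $\mathcal{C}^2$ using $A^2 = A'^2 = B^2 = B'^2 = I$ and the commutation between the two sides:
\[
\mathcal{C}^2 = 4I - [A,A'][B,B'].
\]
Since $\|A\| = \|A'\| = 1$, each commutator has norm at most $2$. Hence $\|\mathcal{C}\|^2 = \|\mathcal{C}^2\| \le 4 + 4 = 8$, which gives $\|\mathcal{C}\| \le 2\sqrt{2}$. I expect this step to be the main obstacle. The difficulty is not length but bookkeeping: the cross terms must be tracked carefully so that every term not involving a commutator cancels down to exactly $4I$. This calculation also shows where non-commutativity enters, echoing Fine's theorem (Theorem~\ref{thm:fine}). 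If $[A,A'] = 0$ or $[B,B'] = 0$, the bound collapses to $\|\mathcal{C}\| \le 2$, recovering the local polytope facet \eqref{eq:chsh}.

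As a cross-check I could instead give a sum-of-squares certificate, which also serves as the level-1 NPA certificate:
\[
2\sqrt{2}\,I - \mathcal{C} = \tfrac{1}{\sqrt{2}}\Bigl[\bigl(A - \tfrac{B+B'}{\sqrt{2}}\bigr)^2 + \bigl(A' - \tfrac{B-B'}{\sqrt{2}}\bigr)^2\Bigr].
\]
Verifying this identity uses only the involution and commutation relations. The right-hand side is visibly positive semidefinite, so it yields $\mathcal{C} \le 2\sqrt{2}\,I$. Applying the same identity to $-\mathcal{C}$, by replacing $B, B'$ with $-B, -B'$, gives the other sign.

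For attainment, take the maximally entangled state $|\Phi^+\rangle = (|00\rangle + |11\rangle)/\sqrt{2}$. For this state the correlation of spin measurements along angles $a, b$ in the $x$--$z$ plane is $\langle \sigma_a \otimes \sigma_b \rangle = \cos(a-b)$. Choose $A = \sigma_z$, $A' = \sigma_x$, $B = (\sigma_z + \sigma_x)/\sqrt{2}$ and $B' = (\sigma_z - \sigma_x)/\sqrt{2}$. The angle differences between the settings are then $\pm\pi/4$ for the first three terms and $3\pi/4$ for the last. The CHSH combination therefore evaluates to
\[
3\cos(\pi/4) - \cos(3\pi/4) = 4/\sqrt{2} = 2\sqrt{2}.
\]
The supremum is thus attained, and so it is in fact a maximum.
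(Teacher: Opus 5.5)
Your proposal is correct and follows the paper's proof. It uses the same CHSH operator $\mathcal{C}$, the same squaring identity $\mathcal{C}^2 = 4I - [A,A'][B,B']$ bounded via $\|[A,A']\|,\|[B,B']\|\le 2$, and attainment by anticommuting observables on a maximally entangled pair. Your explicit $|\Phi^+\rangle$ computation and your sum-of-squares certificate both check out and fill in what the paper only asserts, and your care with the cross terms is warranted: the paper's displayed step $(B+B')(B-B')+(B-B')(B+B') = 2[B,B']$ is wrong, since that sum is $0$ and the commutator term only appears when these products are paired with $AA'$ and $A'A$.
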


\begin{proof}
Let $A, A'$ be $\pm 1$-valued observables on Alice's Hilbert space $\mathcal{H}_A$ and $B, B'$ on Bob's $\mathcal{H}_B$. Operators on different subsystems commute: $[A \otimes I,\, I \otimes B] = 0$, but $A, A'$ on Alice's system and $B, B'$ on Bob's need not commute---and it is precisely this non-commutativity that enables Bell violation. Writing $\mathcal{C} = A \otimes (B+B') + A' \otimes (B-B')$ and expanding $\mathcal{C}^2$ using $A^2 = A'^2 = B^2 = B'^2 = I$, cross-terms cancel by the tensor product structure: $(B+B')^2 + (B-B')^2 = 2\{B,B'\} + 2(2I - \{B,B'\}) = 4I$ and $(B+B')(B-B') + (B-B')(B+B') = 2[B,B']$. Thus
\[
\mathcal{C}^2 = 4I \otimes I - [A, A'] \otimes [B, B'].
\]
Since $\|[A,A']\|_{\mathrm{op}} \leq 2$ and $\|[B,B']\|_{\mathrm{op}} \leq 2$, we obtain $\|\mathcal{C}^2\|_{\mathrm{op}} \leq 8$, hence $\|\mathcal{C}\|_{\mathrm{op}} \leq 2\sqrt{2}$. The bound is achieved by anticommuting observables on maximally entangled qubits.
\end{proof}

\subsection{The NPA Hierarchy}

For bounding quantum correlations computationally, \citet{navascues2007bounding, navascues2008convergent} developed a hierarchy of semidefinite programs (SDPs).

\begin{theorem}[NPA hierarchy]\label{thm:npa}
For each level $k \geq 1$, the level-$k$ NPA bound $\beta_k$ satisfies
\[
\beta_1 \geq \beta_2 \geq \cdots \geq \beta_\infty = T_c,
\]
where $T_c$ is the commuting Tsirelson bound. Each $\beta_k$ is computable via an SDP of size $O(n^{2k})$.
\end{theorem}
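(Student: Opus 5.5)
The plan is to set up the NPA moment-matrix relaxation concretely and check three things in turn: monotonicity, a lower bound by the commuting Tsirelson bound, and convergence to it. The size count comes last.

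Fix a Bell functional $\beta(p)=\sum c_{abxy}\,p(ab\mid xy)$ with $n$ measurement operators in total. Let $\mathcal{S}_k$ be the set of words of length at most $k$ in the projectors $\{E_{a|x}, F_{b|y}\}$. The moment matrix $\Gamma^{(k)}$ is indexed by $\mathcal{S}_k$, with entries $\Gamma_{u,v}=\langle\psi|u^\dagger v|\psi\rangle$. The level-$k$ program maximizes the linear objective over all $\Gamma\succeq 0$ that satisfy $\Gamma_{\emptyset,\emptyset}=1$ and the linear consistency constraints. These constraints come from the operator identities: projector relations, orthogonality, completeness, and commutation between Alice's and Bob's operators. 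Each forces $\Gamma_{u,v}=\Gamma_{u',v'}$ whenever $u^\dagger v$ and $u'^\dagger v'$ reduce to the same word.

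\textbf{Step 1 (monotonicity and lower bound).} Any feasible $\Gamma^{(k+1)}$ restricts, via its principal submatrix on $\mathcal{S}_k$, to a feasible $\Gamma^{(k)}$ with the same objective, so $\beta_{k+1}\le\beta_k$. For the lower bound, take any commuting-operator model $(\mathcal{H},\psi,E,F)$. Its true moments give a Gram matrix, which is positive semidefinite and feasible at every level. Hence $\beta_k\ge T_c$ for all $k$, and so $\beta_\infty:=\lim_k\beta_k\ge T_c$.

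\textbf{Step 2 (convergence, the main obstacle).} I must show $\beta_\infty\le T_c$. First I would use compactness: the entries are bounded by $1$ because all words are contractions. Take optimal $\Gamma^{(k)}$ at each level and use a diagonal argument to extract a pointwise limit infinite matrix $\Gamma^{(\infty)}$ indexed by all words. This limit is positive semidefinite (every finite principal block is) and satisfies all the identities. Then I would run a GNS construction. Define a sesquilinear form on the free $*$-algebra generated by the projectors via $\langle u,v\rangle=\Gamma^{(\infty)}_{u,v}$. Quotient by the null space and complete to a Hilbert space $\mathcal{H}$. Let the generators act by left multiplication and let $\psi$ be the class of the empty word.

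The delicate points are these. The left multiplication operators must be well defined and bounded on the quotient. The projector relations hold because $E^2=E$ is encoded in the moments, which gives $\|Eu\|\le\|u\|$ via $\langle Eu,Eu\rangle=\langle u,Eu\rangle$ and Cauchy--Schwarz. Commutation of Alice's operators with Bob's passes to the quotient. The result is a commuting-operator realization whose value is $\lim_k\beta_k$, so $\beta_\infty\le T_c$. I expect most of the work here, especially the boundedness that lets the operators extend to the completion.

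\textbf{Step 3 (size).} $|\mathcal{S}_k|\le\sum_{j\le k}n^j=O(n^k)$, so $\Gamma^{(k)}$ has $O(n^{2k})$ entries and constraints. Each $\beta_k$ is therefore an SDP of that size, solvable to any precision in polynomial time in its size. Note that equality with $T_c$ is for the commuting-operator bound. In light of MIP$^*$=RE, this may differ from the tensor-product value, so no rate of convergence is claimed.
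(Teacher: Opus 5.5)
Your proposal is correct and is the standard argument from the cited NPA papers: principal-submatrix restriction for monotonicity, moment matrices of commuting-operator models for $\beta_k \ge T_c$, compactness plus a GNS construction for $\beta_\infty \le T_c$, and word counting for the $O(n^{2k})$ size; this is exactly what the paper relies on, since it only describes the moment-matrix relaxation and gives no proof of its own. The one point to tighten is the compactness bound, which must be derived from the SDP constraints rather than from ``words are contractions'' (a feasible $\Gamma$ need not come from operators): your own Cauchy--Schwarz step applied inside each finite $\Gamma^{(k)}$ gives $\Gamma_{Pu,Pu}=\Gamma_{u,Pu}\le\Gamma_{u,u}^{1/2}\Gamma_{Pu,Pu}^{1/2}$, hence $\Gamma_{u,u}\le\Gamma_{\emptyset,\emptyset}=1$ by induction on word length and $|\Gamma_{u,v}|\le 1$ from the $2\times 2$ principal minors.
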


The construction is as follows. Let $\{O_i\}$ be the set of measurement operators (including products up to length $k$). A \emph{moment matrix} $\Gamma$ has entries $\Gamma_{ij} = \tr(\rho\, O_i^\dagger O_j)$. Any quantum realization produces a moment matrix that is positive semidefinite and satisfies linear constraints from the algebraic relations $O_i^2 = I$ (projective measurements) and $[O_i^A, O_j^B] = 0$ (operators on different subsystems commute). The level-$k$ NPA relaxation maximizes a Bell functional over all PSD matrices $\Gamma \succeq 0$ satisfying these constraints:
\begin{equation}\label{eq:npa-sdp}
\beta_k = \max_\Gamma \sum_{x,y,a,b} \mathcal{B}_{xyab}\, \Gamma_{(a|x),(b|y)} \quad \text{s.t.} \quad \Gamma \succeq 0,\; \text{algebraic constraints at level } k,
\end{equation}
where $\mathcal{B}_{xyab}$ are the coefficients of the Bell functional being optimized. This is a standard SDP and can be solved efficiently for small $k$. The parallel with causal inference is direct: the Balke--Pearl LP maximizes a causal functional (the ACE) over the response-function polytope $\{q : \mathbf{A}q = p, q \geq 0\}$. The LP is the classical (commutative) version; the SDP is the quantum (non-commutative) extension. Both are outer relaxations that converge to the true feasible set---the LP is exact for binary variables, and the NPA hierarchy converges to $C_{qc}$ in the limit. \citet{gill2007better} showed how statistical tools from missing-data maximum likelihood can be used to optimize Bell inequalities, providing a further connection between statistical methodology and quantum correlation bounds.

\subsection{MIP$^*$ = RE and the Undecidability of Quantum Correlations}

The most striking recent result is the resolution of Tsirelson's problem.

\begin{theorem}[Ji--Natarajan--Vidick--Wright--Yuen, 2020]\label{thm:mip-re}
$\mathrm{MIP}^* = \mathrm{RE}$. Consequently:
\begin{enumerate}
\item The tensor-product quantum correlation set $C_{qs}$ is not closed.
\item The commuting-operator correlation set $C_{qc}$ strictly contains $C_{qs}$.
\item Tsirelson's problem has a negative answer: $C_{qs} \neq C_{qc}$.
\item Connes' embedding problem has a negative answer.
\item Determining whether a correlation table is quantum-achievable is undecidable.
\end{enumerate}
\end{theorem}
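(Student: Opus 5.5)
The plan separates the complexity-theoretic core, $\mathrm{MIP}^* = \mathrm{RE}$, from the five consequences. The core cannot be re-derived here, so I would invoke it from \cite{ji2021mip} and derive items 2--5 by short arguments that use the NPA hierarchy (Theorem~\ref{thm:npa}) as the key tool. Item 1 needs a separate input, discussed below.

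\textbf{Structure of the core.} The inclusion $\mathrm{MIP}^* \subseteq \mathrm{RE}$ is the easy direction. For finite-dimensional strategies one can enumerate dimensions and rational strategies, and this semi-decides whether the value $\omega^*(G) = 1$ is approached. The hard direction is $\mathrm{RE} \subseteq \mathrm{MIP}^*$. It proceeds by a \emph{gap-preserving compression theorem}: given a uniformly generated family of nonlocal games $G_n$, one builds games $G'_n$ whose verifier runs in $\mathrm{poly}(\log n)$ time, with two properties:
\begin{itemize}
\item $\omega^*(G_n)=1$ implies $\omega^*(G'_n)=1$;
\item $\omega^*(G_n)\le 1/2$ implies $\omega^*(G'_n)\le 1/2$.
\end{itemize}
The entanglement requirement of $G'_n$ is also at least that of $G_n$. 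The compression itself combines three ingredients:
\begin{itemize}
\item \emph{question reduction}, via introspection: rigidity of the Pauli-basis / low-degree test forces the provers to sample their own questions from EPR pairs;
\item \emph{answer reduction}, via PCP-style encoding of answers checked by a low-degree test;
\item \emph{parallel repetition}, to restore the gap.
\end{itemize}
Applying the compression to a game that simulates a Turing machine $M$ and, in parallel, runs the compressed version of itself (a Kleene fixed-point argument) yields a game $G_M$. This game has value $1$ if $M$ halts and value $\le 1/2$ otherwise, which reduces the halting problem to the gapped game-value problem. I expect this step to be the main obstacle. Soundness of introspection under arbitrary infinite-dimensional entanglement rests on quantitative rigidity of the quantum low-degree test, with error bounds independent of dimension. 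This is where nearly all the technical weight lies, and I would cite it rather than reprove it.

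\textbf{Items 3 and 5.} Suppose $C_{qs}$ and $C_{qc}$ had the same closure. The NPA levels $\beta_k$ converge to the commuting value $T_c$ from above. Enumerating finite-dimensional strategies gives lower bounds converging to the tensor-product value. Running both procedures in parallel would then decide, for every game, whether its value is $1$ or at most $1/2$. That would decide the halting problem, a contradiction. Hence $\overline{C_{qs}} = C_{qa} \subsetneq C_{qc}$. This gives item 2, and item 3 follows.

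\textbf{Items 4 and 5.} Item 4 follows from the known equivalence, via Fritz, Junge et al., and Ozawa, of the Connes embedding problem with $C_{qa} = C_{qc}$. Item 5 is the undecidability already established by the reduction: membership of a correlation table in the quantum set, even up to a constant gap, would decide halting.

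\textbf{Item 1.} Non-closure of $C_{qs}$ is not a direct output of the compactness argument above. I would obtain it separately by citing Slofstra's embedding of finitely presented groups into linear-system games. That construction exhibits a correlation in $C_{qa}\setminus C_{qs}$, so $C_{qs}$ is not closed.
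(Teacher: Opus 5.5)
The paper gives no argument for this theorem beyond the citation to \cite{ji2021mip}, so the comparison is with the standard derivation. For items 1--4 your plan is that derivation, and it is sound.
The gapped reduction from halting, together with convergence of the NPA bounds to the commuting value, rules out $C_{qa}=C_{qc}$. This yields items 2 and 3 (your heading ``Items 3 and 5'' should read ``Items 2 and 3''). Item 4 needs only the direction ``Connes $\Rightarrow$ $C_{qa}=C_{qc}$'' of Fritz and Junge et al.
You are also right that item 1 does not follow from the gapped reduction and has to be imported from Slofstra; the paper's ``Consequently'' is inaccurate on that point.
There is one slip in the easy direction. Enumerating finite-dimensional strategies does not semi-decide whether $\omega^*(G)=1$; that exact problem is $\Pi_2$-complete, hence not semi-decidable. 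What it semi-decides is $\omega^*(G)>1/2$, which is all the promise problem requires.

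The genuine gap is item 5. The compression argument proves undecidability of the \emph{game-value} promise problem. That is approximately maximizing a linear functional over $C_{qa}$, or equivalently deciding whether $C_{qa}$ meets the face of the no-signaling polytope on which $G_M$ is won. This is an optimization question, not a membership question about a given correlation table, and ``already established by the reduction'' skips the step that converts one into the other.

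You can supply that step with the Gr\"otschel--Lov\'asz--Schrijver equivalence between weak membership and weak optimization for convex bodies with a known inner ball:
\begin{enumerate}
\item In coordinates on the no-signaling subspace, $C_{qa}$ is compact and convex. It contains the full-dimensional local polytope, hence an explicit ball around the uniform correlation.
\item A membership decider, used as the oracle in the ellipsoid-based GLS reduction, would then compute $\omega^*(G_M)$ to within $1/4$ and so decide halting.
\item An exact decider for rational tables is in particular a weak-membership oracle. This holds even for the non-closed set $C_{qs}$: every point of the $\delta$-interior of $C_{qa}$ lies in the relative interior of $C_{qa}=\overline{C_{qs}}$, and for the convex set $C_{qs}$ that relative interior is contained in $C_{qs}$ itself.
\end{enumerate}
So the argument gives undecidability of membership in both $C_{qa}$ and $C_{qs}$. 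However, the tolerance this reduction needs shrinks with the size of the scenario, so your parenthetical ``even up to a constant gap'' is not justified.
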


This is the Kolmogorov complexity statement: the quantum correlation set is \emph{not computable}. No finite algorithm exists to decide whether a given correlation table lies in the quantum set---the problem is RE-complete, at least as hard as the halting problem \cite{ji2021mip}.

\subsection{Physical Principles Bounding the Tsirelson Value}

Why $2\sqrt{2}$ and not 4 (the no-signaling maximum)? Three equivalent characterizations provide the answer.

\emph{Information causality} \cite{pawlowski2009information}: the information Bob can gain about Alice's data cannot exceed the classical bits she sends. Violation of information causality occurs if and only if CHSH $> 2\sqrt{2}$.

\emph{Macroscopic locality} \cite{navascues2010glance}: coarse-grained averages of quantum experiments must satisfy classical probability. Violation if and only if CHSH $> 2\sqrt{2}$.

\emph{Local quantum mechanics} \cite{oppenheim2010uncertainty}: the Heisenberg uncertainty principle in local measurement bases is saturated at the Tsirelson bound.

Each of these is an information-theoretic constraint on the feasibility polytope: they describe the boundary of the quantum set from three different directions, just as MTR, MTS, and MIV describe the boundary of the causal identified set from three different directions in Manski's framework.

\section{Quantum Bayesian Computation}\label{sec:qbc}

The preceding sections establish the polytope framework and its quantum extension. We now show that the same non-commutativity responsible for Bell violation provides computational advantage for Bayesian inference.

\subsection{The Born Rule--Bayes Rule Duality}

The central insight of QBC \cite{polson2023quantum} is that Bayesian inference and quantum measurement are the same operation in different algebraic settings.

\emph{Classical Bayes:}
\begin{equation}\label{eq:bayes}
\pi(\theta \mid x) = \frac{p(x \mid \theta)\,\pi(\theta)}{p(x)}, \qquad p(x) = \int p(x \mid \theta)\,\pi(\theta)\,d\theta.
\end{equation}

\emph{Quantum Born} (in this section, $\mathsf{M}$ denotes a measurement operator, distinct from the Fr\'{e}chet upper bound $M$ of Section~\ref{sec:polytope}):
\begin{equation}\label{eq:born}
\rho' = \frac{\mathsf{M}\rho \mathsf{M}^\dagger}{\tr(\mathsf{M}\rho \mathsf{M}^\dagger)}, \qquad p(x) = \tr(\mathsf{M}_x \rho).
\end{equation}

The correspondence is: density matrix $\rho \leftrightarrow$ prior $\pi(\theta)$; measurement operator $\mathsf{M} \leftrightarrow$ likelihood $p(x \mid \theta)$; post-measurement state $\rho' \leftrightarrow$ posterior $\pi(\theta \mid x)$; Born probability $\leftrightarrow$ marginal $p(x)$.

\begin{proposition}[QBC duality]\label{prop:qbc}
When the density matrix $\rho$ is diagonal in the computational basis, quantum measurement reduces to classical Bayesian conditioning: $\rho'_{ii} = \pi(\theta_i \mid x)$. Non-commutativity of the measurement operators with $\rho$---which in multipartite systems manifests as entanglement---is the departure from classical Bayesian inference.
\end{proposition}

\begin{proof}
When $\rho$ is diagonal with $\rho_{ii} = \pi(\theta_i)$ and $\mathsf{M}_j$ is a diagonal projector with $(\mathsf{M}_j)_{ii} = p(x_j \mid \theta_i)$, then $p(x_j) = \tr(\mathsf{M}_j \rho) = \sum_i p(x_j \mid \theta_i)\,\pi(\theta_i)$ (the marginal likelihood), and $(\mathsf{M}_j \rho \mathsf{M}_j)_{ii}/p(x_j) = p(x_j \mid \theta_i)\,\pi(\theta_i)/p(x_j) = \pi(\theta_i \mid x_j)$ by Bayes' theorem. Non-commutativity of $\mathsf{M}_j$ with $\rho$ corresponds to quantum coherence.
\end{proof}

\subsection{Bell Violation as Bayesian Non-Classicality}

\begin{theorem}\label{thm:bell-bayes}
A Bell-violating correlation table $P(A,B \mid X,Y)$ admits no classical Bayesian model: there is no prior $\pi(\lambda)$ and likelihoods $p(A \mid X,\lambda)$, $p(B \mid Y,\lambda)$ such that
\[
P(A,B \mid X,Y) = \int p(A \mid X,\lambda)\,p(B \mid Y,\lambda)\,d\pi(\lambda).
\]
This is equivalent to Fine's theorem: the joint distribution $\pi(A,B,\lambda \mid X,Y)$ does not exist in the Kolmogorov sense.
\end{theorem}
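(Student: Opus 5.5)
We argue by contraposition: assuming such a classical Bayesian model exists, we show the correlation table satisfies every CHSH inequality. This contradicts the hypothesis that the table is Bell-violating, which we read as violating \eqref{eq:chsh} for some choice of signs and labelling of settings. We work in the two-setting, two-outcome scenario of Theorem~\ref{thm:fine}, with $X\in\{x,x'\}$, $Y\in\{y,y'\}$ and $A,B\in\{\pm1\}$.

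\textbf{Step 1: reduce to deterministic responses.} Suppose the model $P(A,B\mid X,Y)=\int p(A\mid X,\lambda)\,p(B\mid Y,\lambda)\,d\pi(\lambda)$ exists. For each $\lambda$, the local response $p(\cdot\mid X,\lambda)$ is a point in a product of simplices. It is therefore a mixture of the deterministic functions $a:\{x,x'\}\to\{\pm1\}$, and similarly for Bob. We enlarge the hidden variable to $\lambda'=(\lambda,a,b)$, drawing $a$ and $b$ independently given $\lambda$ from these mixture weights. The enlarged model is still of the product form, and each response is now a point mass. This is the implication (2)$\Rightarrow$(1) of Theorem~\ref{thm:fine}, which we may invoke directly; the point of this step is to identify the prior-and-likelihood form in the statement with the ``factorizable stochastic model'' of Fine's item~2.

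\textbf{Step 2: build the joint distribution.} With deterministic responses, we push $\pi$ forward to the four random variables $(a(x),a(x'),b(y),b(y'))\in\{\pm1\}^4$. Every observed pair distribution is a marginal of this single joint, which is item~3 of Theorem~\ref{thm:fine}. Equivalently, for each $\lambda$ the quantity $a(x)[b(y)+b(y')]+a(x')[b(y)-b(y')]$ takes values in $\{\pm2\}$, because one bracket vanishes and the other equals $\pm2$. Integrating against $\pi$ gives $|S|\le2$, which contradicts the violation. This proves the first claim.

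\textbf{Step 3: the ``equivalence'' sentence.} The claim that non-existence of the model is the same as non-existence of a Kolmogorov joint $\pi(A,B,\lambda\mid X,Y)$ follows from reading Theorem~\ref{thm:fine} in both directions. For the converse, suppose a joint of $(A_x,A_{x'},B_y,B_{y'})$ exists. Take $\lambda$ to be the full outcome vector and let the likelihoods be the indicator functions. This yields a model of the stated form.

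\textbf{Main obstacle.} The only delicate point is Step~1 for a general, possibly non-discrete, $\lambda$. We need the mixture weights to depend measurably on $\lambda$ so that the enlarged prior is well defined. Since the outcome and setting sets are finite, the weights are explicit products of the finitely many numbers $p(A\mid X,\lambda)$. They are measurable whenever the likelihoods are, so we expect this step to go through without difficulty.
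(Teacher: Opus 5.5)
Your proposal is correct and follows the same route as the paper: the paper's proof invokes Fine's theorem, noting that a factorizable model gives a joint distribution on $(A,A',B,B')$, that such a joint forces CHSH, and hence that violation rules out the Bayesian model. You additionally make explicit two things the paper leaves to the citation: that the prior-and-likelihood form is exactly Fine's factorizable stochastic model, and the easy direction itself, via the mixture-to-deterministic reduction and the pointwise bound $a(x)[b(y)+b(y')]+a(x')[b(y)-b(y')]\in\{\pm 2\}$.
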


\begin{proof}
By Fine's theorem (Theorem~\ref{thm:fine}), a local hidden variable model exists if and only if a joint distribution on $(A,A',B,B')$ exists. If such a joint distribution existed, the CHSH inequality would hold. Bell violation ($|\cdot| > 2$) therefore implies no such distribution and hence no Bayesian model $\pi(\lambda)\,p(A \mid X,\lambda)\,p(B \mid Y,\lambda)$.
\end{proof}

Bell violation is thus equivalent to the non-existence of a classical Bayesian model for quantum correlations. QBC exploits this by using quantum states (density matrices) as priors---they can represent distributions that have no classical joint representation, enabling computation beyond what classical MCMC achieves.

\subsection{Quantum Computational Speedups for Bayesian Inference}

The connection between Bell violation and quantum speedups in computation is via Theorem~\ref{thm:bell-bayes}: a Bell-violating correlation structure admits no classical joint distribution. More generally, any quantum state with non-diagonal density matrix (coherence/entanglement) encodes correlations that cannot be reproduced by any classical measurement-by-measurement simulation. This entanglement is the computational resource: an algorithm that exploits it gains speedup precisely because the intermediate states violate classical feasibility constraints.

Three results illustrate this principle:

\emph{Quantum Gaussian Processes.} For a Gaussian process with kernel matrix $K \in \mathbb{R}^{N \times N}$, the posterior mean $\mu^* = K(K + \sigma^2 I)^{-1}y$ requires $O(N^3)$ classically. The HHL algorithm \cite{harrow2009quantum} achieves $O(\kappa \log N / \varepsilon)$ where $\kappa$ is the condition number, by maintaining quantum superposition over matrix-vector product evaluations. This superposition---if projected onto a measurement basis---would violate Bell inequalities across different matrix-entry subproblems, making sequential classical computation necessary.

\emph{Quantum MCMC.} Quantum walks on the Markov chain state space achieve quadratic speedup: $O(\sqrt{1/\delta})$ steps versus $O(1/\delta)$ classically \cite{szegedy2004quantum}. The speedup requires coherent superposition over the transition graph, maintained throughout the walk. Classical simulation would require exponential overhead to represent all ``simultaneously explored'' paths.

\emph{Quantum Stochastic Gradient Descent.} Amplitude estimation \cite{montanaro2015quantum} yields $O(1/\varepsilon)$ convergence versus $O(1/\varepsilon^2)$ classically for gradient estimation, via superposition over objective evaluations. Each speedup arises from maintaining quantum coherence (non-commutativity between successive operations), which would be destroyed by intermediate classical measurement.

\subsection{K-GAM as Classical Approximation to Quantum QBC}

\citet{polson2025kgam} show that Kolmogorov's Superposition Theorem provides the classical architecture closest in spirit to quantum function evaluation.

\begin{theorem}[\cite{kolmogorov1957representation}]\label{thm:kst}
For any continuous $f: [0,1]^n \to \mathbb{R}$, there exist continuous $\Phi_q: \mathbb{R} \to \mathbb{R}$ and $\varphi_{q,p}: [0,1] \to \mathbb{R}$ such that
\begin{equation}\label{eq:kst}
f(x_1, \ldots, x_n) = \sum_{q=0}^{2n} \Phi_q\!\left(\sum_{p=1}^{n} \varphi_{q,p}(x_p)\right).
\end{equation}
\end{theorem}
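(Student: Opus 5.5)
I would follow the classical constructive route due to Kolmogorov, in Lorentz's streamlined form. The central device is to choose the $2n+1$ inner functions once and for all, independent of $f$; only the outer functions $\Phi_q$ then depend on $f$. For the inner functions I take $\varphi_{q,p}(x) = \lambda_p\,\psi(x + q\varepsilon)$. Here $\psi:[0,1]\to[0,1]$ is a single continuous, strictly increasing function, the $\lambda_1,\dots,\lambda_n>0$ are rationally independent constants, and $\varepsilon = 1/(2n)$ (or a comparable small shift). Set $\xi_q(x) = \sum_p \lambda_p \psi(x_p + q\varepsilon)$. The theorem then reduces to the following one-step approximation lemma.

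\emph{Lemma.} There is $\theta\in(0,1)$ such that for every continuous $g$ on $[0,1]^n$ there exists a continuous $\chi$ with
\[
\|\chi\|_\infty \le \tfrac{1}{n+1}\|g\|_\infty, \qquad \Bigl\|g - \sum_{q=0}^{2n} \chi\circ \xi_q\Bigr\|_\infty \le \theta\,\|g\|_\infty .
\]

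Granting the lemma, the rest is a geometric-series argument. Set $g_0 = f$ and $g_{r+1} = g_r - \sum_q \chi_r\circ\xi_q$. Then $\|g_r\|\le\theta^r\|f\|$, and the series $\Phi = \sum_r \chi_r$ converges uniformly to a continuous function. Passing to the limit gives $f = \sum_q \Phi\circ\xi_q$, which has the required form~\eqref{eq:kst} with $\Phi_q = \Phi$.

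To prove the lemma, at each scale $k$ I cover $[0,1]$ by families of closed intervals of length about $10^{-k}$, separated by small gaps. For each shift $q$, the shifted families produce a grid of cubes in $[0,1]^n$. The shifts are chosen so that every point $x$ fails to lie in a cube for at most $n$ of the $2n+1$ values of $q$. This is a pigeonhole count: each coordinate can sit in a gap for at most one $q$. Consequently, for at least $n+1$ of the shifts, $x$ lies in a cube.

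Next, $\psi$ must be built so that the images $\xi_q(S)$ of distinct cubes $S$ are pairwise disjoint intervals on the real line. This is where rational independence of the $\lambda_p$ enters: $\psi$ is constant-ish (nearly constant) on each interval, with rationally chosen values, and strict monotonicity is restored in the gaps. With disjointness in hand, I define $\chi$ on each image interval $\xi_q(S)$ to be $\frac{1}{n+1} g(x_S)$, where $x_S$ is the center of $S$, and interpolate linearly in between. If $10^{-k}$ is small enough that $g$ varies by less than $\delta$ on each cube, the error at any point is at most the sum of two terms. The first is $\frac{n}{n+1}\|g\|$, coming from the at most $n$ bad shifts. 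The second is $\delta$, coming from the good ones. This gives $\theta = \frac{n}{n+1} + \delta < 1$.

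The main obstacle is constructing a single $\psi$ that works at every scale $k$ simultaneously, since the iteration demands ever finer cubes while $\psi$ must stay fixed. I would build $\psi$ as a uniform limit $\psi = \lim_k \psi_k$. At each stage, $\psi_{k+1}$ is obtained by perturbing $\psi_k$ only slightly, keeping $\psi_k$'s values on the stage-$k$ intervals up to tiny adjustments. These adjustments are chosen so that the disjointness of the images $\xi_q(S)$ already achieved at all earlier levels is preserved, and so that the new level-$(k+1)$ images are also disjoint. Since disjointness is an open condition, it survives sufficiently small perturbations, which makes this inductive scheme feasible. Verifying it carefully, with quantitative control of the perturbation sizes, is the technical heart of the proof.
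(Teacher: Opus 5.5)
The paper gives no proof of this theorem; it simply quotes Kolmogorov (1957). So the benchmark is the standard Kolmogorov--Lorentz--Sprecher argument. You reproduce its architecture correctly: fixed inner functions, a one-step lemma with $\theta=\frac{n}{n+1}+\delta$, and a geometric series. The gap is that you combine Lorentz's \emph{single} outer function with Sprecher's single inner function $\psi$ entering only through translates, and this combination cannot work.

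Because $\xi_q(x)=\xi_0(x+q\varepsilon\mathbf 1)$, the image $\xi_q(S)$ of a family-$q$ cube $S$ equals $\xi_{q'}(S+(q-q')\varepsilon\mathbf 1)$. That set is the image of a family-$q'$ cube whenever the translate lies in $[0,1]^n$. So the disjointness of all images $\xi_q(S)$ that you require is impossible: your $\chi$ would have to equal both $g(x_S)/(n+1)$ and $g(x_S+(q-q')\varepsilon\mathbf 1)/(n+1)$ on the same interval.

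No better choice of $\psi$ rescues this when $n\ge 2$. If $f=\sum_{q=0}^{2n}\Phi\circ\xi_q$, then for $x,x+\varepsilon\mathbf 1\in[0,1]^n$ the sum telescopes:
\[
f(x+\varepsilon\mathbf 1)-f(x)=\Phi\bigl(\xi_0(x+(2n+1)\varepsilon\mathbf 1)\bigr)-\Phi\bigl(\xi_0(x)\bigr).
\]
Take $u$ supported in a small ball $B\subset(0,1-\varepsilon)^n$ with $B\cap(B+\varepsilon\mathbf 1)=\emptyset$, and set $f(y)=u(y-\varepsilon\mathbf 1)$ (zero off $B+\varepsilon\mathbf 1$). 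Then the left side equals $u$ on $B$. So $u$ would be an additive function $a(s)+b(t)$ of the pair $(s,t)=(\xi_0(x),\xi_0(x+(2n+1)\varepsilon\mathbf 1))$. There are two cases:
\begin{itemize}
\item The pair map identifies two points of $B$. This is always the case for $n\ge 3$, by Borsuk--Ulam.
\item The pair map is injective. Then $n=2$, its image is open by invariance of domain and contains a rectangle, and the preimages $p_{ij}$ of the corners force $u(p_{11})-u(p_{12})-u(p_{21})+u(p_{22})=0$.
\end{itemize}
A bump $u$ that separates the two identified points, or violates the corner identity, gives a continuous $f$ with no representation of your form.

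The repair is cheap, because the statement allows $\Phi_q$ to depend on $q$. Build a separate $\chi_{r,q}$ for each $q$ and set $\Phi_q=\sum_r\chi_{r,q}$. Disjointness is then needed only among the images $\xi_q(S)$ for fixed $q$, and rational independence of the $\lambda_p$ together with distinct rational plateau values of $\psi$ does deliver that. Equivalently, you can keep one $\Phi$ but separate the ranges: add $q/n$ to each $\varphi_{q,p}$ and take $\sum_p\lambda_p<1$, so that $\sum_p\varphi_{q,p}$ takes values in $[q,q+1)$. Or you can use Lorentz's $2n+1$ distinct increasing $\phi_q$.

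Two smaller points also need fixing. First, with $\varepsilon=1/(2n)$ and decimal meshes, the shift $2n\varepsilon=1$ is a multiple of every mesh $10^{-k}$. So families $q=0$ and $q=2n$ have their gaps in the same places, and a single coordinate can be bad for two values of $q$, which breaks your pigeonhole count. The shift must be matched to the grids at every scale, e.g.\ base $\gamma\ge 2n+2$ and $\varepsilon=1/(\gamma(\gamma-1))$. This gives $q\varepsilon\equiv q\gamma^{-k}/(\gamma-1)\pmod{\gamma^{-k}}$ for all $k$. Second, $\psi$ must be defined on $[0,1+2n\varepsilon]$, not on $[0,1]$.
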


The K-GAM network architecture implements this theorem computationally. Both QBC and K-GAM share the goal of representing complex posterior distributions via sparse, factorized structures. Table~\ref{tab:kgam-qbc} presents suggestive parallels between the two approaches:

\begin{table}[H]
\centering
\caption{Structural parallels between QBC and K-GAM (analogies, not formal equivalences).}\label{tab:kgam-qbc}
\small
\begin{tabular}{@{}p{0.38\textwidth}p{0.55\textwidth}@{}}
\toprule
QBC concept & K-GAM analog \\
\midrule
Density matrix (encodes correlations) & Embedding functions $\varphi_{q,p}$ (encode feature interactions) \\
Non-diagonal terms (coherence) & Outer functions $\Phi_q$ summed over indices (nonlinearity) \\
Born rule: condition via projector & Prediction: $f(x) = \sum_q \Phi_q(\sum_p \varphi_{q,p}(x_p))$ \\
Quantum superposition of amplitudes & $O(n)$ functions vs.\ MLP $O(2^n)$ neurons (KST bound) \\
\bottomrule
\end{tabular}
\end{table}

\noindent\textbf{Status of this correspondence:} The parallels in Table~\ref{tab:kgam-qbc} are inspirational but not formally proven. K-GAM is a classical architecture that avoids the combinatorial explosion of fully connected networks; it achieves this through additive structure, not through quantum principles. Whether K-GAM can be shown to achieve the same information-theoretic bounds as QBC, or whether it serves as an optimal classical ``shadow'' of quantum computation in a rigorous sense, remains open.

\begin{proposition}[Horseshoe sparsity in K-GAM]\label{prop:horseshoe}
Consider the K-GAM regression model:
\begin{equation}\label{eq:kgam-horseshoe}
y_i = \sum_{q=0}^{2n} \Phi_q\!\left(\sum_{p=1}^{n} \varphi_{q,p}(x_{ip})\right) + \epsilon_i, \quad \epsilon_i \sim N(0, \sigma^2),
\end{equation}
where the outer functions $\Phi_q$ are parameterized by coefficients $\theta_q \in \mathbb{R}^d$, each equipped with the horseshoe prior \cite{polson2010shrink, polson2012half}:
\[
\theta_{q,j} \mid \lambda_j, \tau \sim N(0, \lambda_j^2 \tau^2), \quad \lambda_j \sim \mathrm{C}^+(0,1), \quad \tau \sim \mathrm{C}^+(0,1).
\]
Here $\lambda_j$ is the local shrinkage parameter and $\tau$ is the global shrinkage parameter.
Then:
\begin{enumerate}
\item \textbf{(Posterior Sparsity)} The posterior $\pi(\theta \mid \mathcal{D})$ concentrates on solutions where the number of non-negligible coefficients is strictly smaller than $d(2n+1)$.
\item \textbf{(Near-minimax rates)} For signals $\theta^* \in \mathcal{S}_s$ (the set of $s$-sparse vectors), the posterior mean $\hat{\theta}$ satisfies
\[
\mathbb{E}_{\theta^*} \|\hat{\theta} - \theta^*\|_2^2 \leq C\,s \log(dn) / N
\]
up to constants depending on the noise level $\sigma^2$ and feature scales.
\item \textbf{(Model selection)} The posterior mode automatically selects a subset of outer functions $\Phi_q$ needed to fit the data, discarding functions whose coefficients are pushed to zero. The effective number of active functions is of order $s$, strictly less than the KST upper bound of $2n+1$.
\end{enumerate}
\end{proposition}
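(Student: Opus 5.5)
The plan is to reduce the K-GAM regression to a sparse linear model and then invoke the known posterior-contraction theory for the horseshoe prior. First, fix the inner embedding functions $\varphi_{q,p}$ (either fixed Kolmogorov--Sprecher-type maps or estimated on a separate split). Then parameterize each outer function as a basis expansion
\[
\Phi_q(t)=\sum_{j=1}^d \theta_{q,j}\,b_j(t).
\]
With the $\varphi_{q,p}$ fixed, model \eqref{eq:kgam-horseshoe} becomes $y=X\theta+\epsilon$. Here $\theta\in\mathbb{R}^{d(2n+1)}$ stacks the $\theta_q$, and $X$ has entries $b_j\bigl(\sum_p\varphi_{q,p}(x_{ip})\bigr)$. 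Every claim is then a statement about the horseshoe posterior in a Gaussian linear model with $P=d(2n+1)$ coefficients, and $\log P\asymp\log(dn)$.

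\textbf{Item 2 (rate).} For item 2 I would impose a compatibility or restricted-eigenvalue condition on $X^\top X/N$; these are the ``feature scales'' hidden in $C$. Under that condition I would apply the horseshoe contraction results of van der Pas, Kleijn and van der Vaart, together with their regression extensions. Those results give contraction at rate $s\log(P/s)/N$ for the posterior, and the same rate in mean-squared error for the posterior mean. Two routes are available for $\tau$. The direct one is to tune $\tau\asymp (s/P)^{\alpha}$. Since the statement puts a $\mathrm{C}^+(0,1)$ hyperprior on $\tau$, I would instead use the full-Bayes result: the posterior of $\tau$ concentrates below roughly $(s/P)\sqrt{\log(P/s)}$, which again yields the rate up to constants. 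Bounding $\log(P/s)\le\log(dn)+O(1)$ gives the stated bound.

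\textbf{Item 1 (sparsity).} Item 1 follows from the dimension bound that is part of the same contraction theory. With posterior probability tending to one, the number of coordinates with $|\theta_{q,j}|$ above a threshold $\delta_N\asymp\sigma\sqrt{\log P/N}$ is $O(s)$. Since $s<P$ by assumption, this is strictly less than $d(2n+1)$.

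\textbf{Item 3 (selection).} For item 3 I would call an outer function $\Phi_q$ \emph{active} when its block $\theta_q$ has a nonnegligible coordinate. At most $O(s)$ coordinates survive, and each surviving coordinate activates at most one $q$. So at most $O(s)$ outer functions are active, which is below $2n+1$ whenever $s<2n+1$.

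\textbf{Main obstacle.} The hardest point is that the horseshoe density has a pole at zero, so the posterior mode is degenerate. I would therefore restate ``mode'' as the thresholded posterior mean or the marginal MAP after integrating out $\lambda_j$. I would also make $s<2n+1$ an explicit hypothesis, since otherwise the strict inequality can fail.

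The second obstacle is justifying the design condition when the features come from KST inner functions. Those inner functions are only continuous, and possibly highly irregular, so a restricted-eigenvalue condition need not hold automatically. I would handle this by assuming it outright, and I would flag that items 1--3 hold conditionally on fixed inner maps and this design condition.
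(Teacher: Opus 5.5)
Your plan is sound and shares the paper's skeleton: treat K-GAM as a linear model in basis functions, import horseshoe theory, define ``active'' by thresholding the posterior mean, and assume $s\ll 2n+1$. The paper's proof of item 3 also thresholds the posterior mean, silently abandoning the mode. The key step differs, though.

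The paper obtains item 1 on its own, from the shape of the horseshoe marginal (pole at zero, Cauchy-like tails) plus a citation to Polson--Scott. It obtains item 2 by citing sequence-model near-minimaxity and asserting that, since the prior acts independently on each $\theta_{q,j}$, the regression posterior ``inherits'' the rate. In the appendix an adaptive-$\ell_1$ penalized estimator stands in for the posterior mean.

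You instead prove item 2 first, under an explicit restricted-eigenvalue/compatibility condition. Items 1 and 3 then follow by counting: on the contraction event, at most $\|\theta-\theta^*\|_2^2/\delta_N^2\lesssim s$ off-support coordinates exceed $\delta_N$, and each surviving coordinate activates one block.

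The paper states no design hypothesis (only that $C_0$ depends on ``condition numbers''), and that is exactly where it breaks. In regression the likelihood couples coordinates through $X^\top X$, so coordinatewise prior independence does not transfer sequence-model rates. Also, a penalized $\arg\min$ is not the posterior mean. Your route is conditional (fixed inner maps, design condition, small $s$), but it is a real argument in which one contraction theorem does all the work.

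A few things to tighten:

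\begin{enumerate}
\item The strict inequalities need $Cs<d(2n+1)$ and $Cs<2n+1$ for the constant hidden in $O(s)$, i.e.\ the paper's $s\ll 2n+1$. Merely $s<2n+1$ is not enough.
\item The full-Bayes horseshoe results you want are proved in the normal-means model, with the hyperprior on $\tau$ truncated to $[1/n,1]$. For the untruncated $\mathrm{C}^+(0,1)$ in regression you must either truncate or prove the extension yourself.
\item Converting in-probability contraction into the stated $\mathbb{E}_{\theta^*}$ bound for the posterior mean needs separate control on the exceptional event.
\item The ``marginal MAP after integrating out $\lambda_j$'' does not cure the degeneracy, because that marginal is precisely the density with the logarithmic pole. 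Keep the thresholded posterior mean.
\item The stated prior shares $\lambda_j$ across $q$. Like the paper, you are treating it as an independent horseshoe on all $d(2n+1)$ coordinates, and you should say so.
\end{enumerate}
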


\begin{proof}
We prove each item.

\noindent\textit{Item 1 (Posterior Sparsity).} The horseshoe marginal, obtained by integrating $\lambda_j$ out of $N(0, \lambda_j^2 \tau^2) \cdot C^+(0,1)$, satisfies \cite{polson2010shrink}:
\[
\pi(\theta_{q,j} \mid \tau) = \int_0^\infty \frac{1}{\sqrt{2\pi \lambda^2 \tau^2}} \exp\!\left(-\frac{\theta_{q,j}^2}{2\lambda^2 \tau^2}\right) \frac{2}{\pi(1+\lambda^2)}\,d\lambda.
\]
This density has two key properties: (i) a pole at the origin ($\pi(\theta \mid \tau) \to \infty$ as $\theta \to 0$), concentrating mass near zero; and (ii) Cauchy-like tails $\pi(\theta \mid \tau) \sim \log(1 + 4\tau^2/\theta^2)$ for large $|\theta|$, which do not over-shrink large signals. By Bayes' theorem,
\[
\pi(\theta \mid \mathcal{D}) \propto L(\mathcal{D} \mid \theta) \prod_{q,j} \pi(\theta_{q,j} \mid \tau),
\]
where $L(\mathcal{D} \mid \theta) = \exp\bigl(-\frac{1}{2\sigma^2} \sum_i (y_i - f(x_i; \theta))^2\bigr)$. The pole at zero drives most coefficients to negligible values while the heavy tails preserve large signals. Polson \& Scott (2010, Theorem 2) show that a fixed number of coefficients dominate the posterior mean, with the rest shrunk to near zero.

\noindent\textit{Item 2 (Near-minimax rates).} The horseshoe prior achieves near-minimax rates for sparse estimation in the standard Gaussian sequence model (Polson \& Scott, 2010, Theorem 3). This result extends to high-dimensional regression via empirical Bayes and cross-validation, where the effective sparsity $s$ is estimated from data. For the K-GAM model, the key observation is that the horseshoe applies independently to each coefficient $\theta_{q,j}$, so the joint posterior over $(\theta_{q})_{q,j}$ inherits the minimax property. Specifically, if the true function $f^*(x) = \sum_{q \in S^*} \Phi_q^*(x)$ with $|S^*| = s \ll 2n+1$, then the horseshoe K-GAM estimator $\hat{f}$ satisfies
\[
\mathbb{E}_{\theta^*} \int (\hat{f}(x) - f^*(x))^2 \, dx \lesssim \frac{s \log(dn)}{N}
\]
up to problem-dependent constants. This is near-optimal in the minimax sense over $\mathcal{S}_s$.

\noindent\textit{Item 3 (Model selection).} From item 1, the posterior concentrates on sparse solutions. The posterior mean $\hat{\theta}$ will have most entries near zero. One can define the set of ``active'' outer functions as those with $|\hat{\theta}_{q,j}| > \epsilon$ for a threshold $\epsilon$ (e.g., $\epsilon = \sigma \sqrt{\log(kn)/N}$). By item 2, the number of such functions will be of order $s$, the true sparsity. Since $s \ll 2n+1$ (assumption for tractability), the horseshoe prior has selected a much smaller subset than the KST worst-case bound, achieving the benefit of the KST result (any function on $[0,1]^n$ can be represented with $2n+1$ outer functions) without paying the full cost (must estimate $O(kn)$ parameters).
\end{proof}

\noindent\textbf{Connection to Kolmogorov complexity (open question).} The horseshoe's automatic sparsity selection suggests an analogy to Kolmogorov complexity's principle: both favor ``minimal descriptions'' (few nonzero coefficients vs. short programs). However, the connection is not formal. KC asks for the shortest bit-string describing a single object; the horseshoe asks for the sparsest $\mathbb{R}^d$ coefficient vector fitting data. KC is uncomputable; horseshoe K-GAM is computable. The analogy highlights a shared principle (parsimony) but the mathematical relationship between coefficient sparsity and program length remains open.

\section{Unified Framework}\label{sec:unified}

We now collect the results of the preceding sections into a single framework organized around the marginal compatibility polytope.

\subsection{The Polytope Dictionary}

All the results in this paper can be organized via a single abstraction. Define the \emph{marginal compatibility polytope} $\mathcal{M}$ as the set of joint distributions (or parameter vectors) consistent with observed marginals and structural assumptions. Table~\ref{tab:dictionary} presents the dictionary.

\begin{table}[H]
\centering
\caption{The polytope dictionary: each row instantiates the same marginal-compatibility structure.}\label{tab:dictionary}
\footnotesize
\begin{tabular}{@{}llll@{}}
\toprule
Field & $\mathcal{M}$ & Constraint & Bound type \\
\midrule
Bell (physics) & Local-realist correlations & CHSH & $|\cdot| \leq 2$ \\
Quantum (physics) & Quantum correlations & Tsirelson & $|\cdot| \leq 2\sqrt{2}$ \\
Fr\'{e}chet (probability) & Couplings of marginals & Hoeffding & $W \leq F \leq M$ \\
Pearl (causal) & IV-compatible distributions & Instrumental ineq. & $\max_z \cdots \leq 1$ \\
Balke--Pearl (causal) & Response-type distributions & LP bounds & $[\mathrm{ATE}]_{\min},\; [\mathrm{ATE}]_{\max}$ \\
Manski (econometrics) & Potential outcome distributions & No-assumptions & Width = 1 interval \\
Tian--Pearl (causal) & Counterfactual distributions & PN/PS/PNS & Combined data bounds \\
FKG (probability)\footnotemark & Log-supermodular distributions & Positive association & $\mathrm{Cov}(f,g) \geq 0$ \\
\bottomrule
\end{tabular}
\end{table}

\footnotetext{The FKG inequality \cite{fortuin1971correlation} states that on a distributive lattice with a log-supermodular measure, increasing functions are positively correlated.}

The unifying observation is that all these polytopes are projections of the same object: the set of joint distributions on $(Y(0), Y(1), X, Z, \lambda)$ consistent with observed marginals $P(Y \mid X, Z)$ and structural independence assumptions. The more structural assumptions imposed, the more faces the polytope has, and the tighter the bounds.

\subsection{Worked Example: CHSH and the Binary IV Model}\label{sec:example}

To make the polytope correspondence concrete, consider the simplest nontrivial case: binary variables throughout ($A, A', B, B' \in \{-1,+1\}$ in the Bell setting; $X, Y, Z \in \{0,1\}$ in the causal setting).

\emph{Bell side.} Alice chooses setting $X \in \{0,1\}$ (measuring $A$ or $A'$) and obtains outcome $a \in \{-1,+1\}$; Bob chooses $Y \in \{0,1\}$ (measuring $B$ or $B'$) and obtains $b \in \{-1,+1\}$. The observed data are the four correlations $\langle A_x B_y \rangle$ for $(x,y) \in \{0,1\}^2$. A local hidden-variable model requires a joint distribution over $(A, A', B, B')$ returning these four marginals. The local-realist polytope $\mathcal{L}$ is the convex hull of $2^4 = 16$ deterministic strategies $\lambda = (a_0, a_1, b_0, b_1) \in \{-1,+1\}^4$. Its 8 nontrivial facets are the CHSH inequalities $|S| \leq 2$ with $S = \langle A_0 B_0 \rangle + \langle A_0 B_1 \rangle + \langle A_1 B_0 \rangle - \langle A_1 B_1 \rangle$.

\emph{Causal side.} An instrument $Z \in \{0,1\}$ influences treatment $X \in \{0,1\}$, which influences outcome $Y \in \{0,1\}$, with an unobserved confounder $U$. The observed data are the four probabilities $P(Y=y, X=x \mid Z=z)$ for $(x,y,z)$. There are $4^2 = 16$ response types $(r_X, r_Y)$ where $r_X: \{0,1\} \to \{0,1\}$ maps $Z$ to $X$ (4 functions) and $r_Y: \{0,1\} \to \{0,1\}$ maps $X$ to $Y$ (4 functions). The response-function polytope $\mathcal{R}$ is the convex hull of these 16 deterministic types---the same 16 vertices as the Bell polytope under the identification $Z \leftrightarrow$ measurement setting, $X \leftrightarrow$ Alice's outcome, $Y \leftrightarrow$ Bob's outcome.

Figure~\ref{fig:bell-iv} displays the two scenarios side by side with the variable mapping.

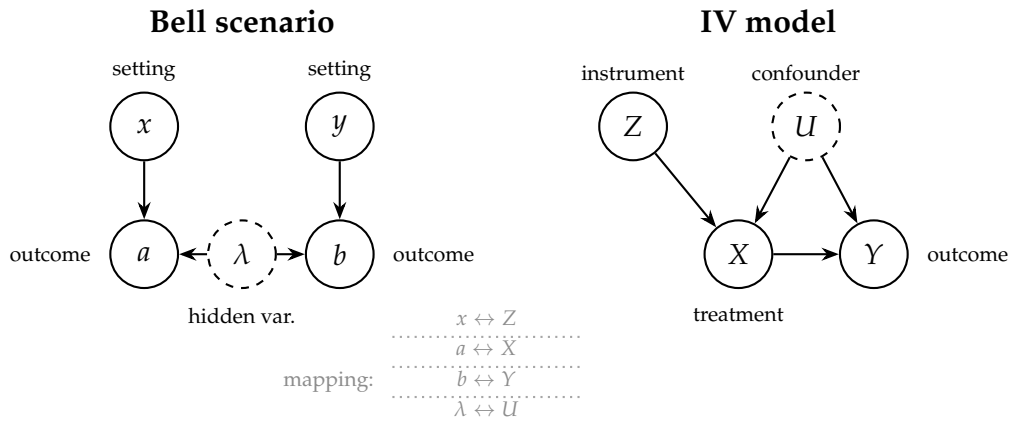
\begin{figure}[H]
\centering
\begin{tikzpicture}[
  var/.style={circle, draw, thick, minimum size=9mm, font=\small},
  latent/.style={circle, draw, thick, dashed, minimum size=9mm, font=\small},
  arr/.style={-Stealth, thick},
  mapline/.style={dotted, thick, gray!70},
]
  \node[font=\bfseries] at (0, 2.6) {Bell scenario};
  \node[var] (X) at (-1.3, 1.2) {$x$};
  \node[var] (Y) at (1.3, 1.2) {$y$};
  \node[var] (A) at (-1.3, -0.5) {$a$};
  \node[var] (B) at (1.3, -0.5) {$b$};
  \node[latent] (L) at (0, -0.5) {$\lambda$};
  \draw[arr] (X) -- (A);
  \draw[arr] (Y) -- (B);
  \draw[arr] (L) -- (A);
  \draw[arr] (L) -- (B);
  \node[font=\scriptsize, above=0mm] at (X.north) {setting};
  \node[font=\scriptsize, above=0mm] at (Y.north) {setting};
  \node[font=\scriptsize, left=1mm] at (A.west) {outcome};
  \node[font=\scriptsize, right=1mm] at (B.east) {outcome};
  \node[font=\scriptsize, below=1mm] at (L.south) {hidden var.};
  \node[font=\bfseries] at (7, 2.6) {IV model};
  \node[var] (Z) at (5.2, 1.2) {$Z$};
  \node[var] (Xc) at (6.6, -0.5) {$X$};
  \node[var] (Yc) at (8.4, -0.5) {$Y$};
  \node[latent] (U) at (7.5, 1.2) {$U$};
  \draw[arr] (Z) -- (Xc);
  \draw[arr] (Xc) -- (Yc);
  \draw[arr] (U) -- (Xc);
  \draw[arr] (U) -- (Yc);
  \node[font=\scriptsize, above=0mm] at (Z.north) {instrument};
  \node[font=\scriptsize, below=1mm] at (Xc.south) {treatment};
  \node[font=\scriptsize, right=1mm] at (Yc.east) {outcome};
  \node[font=\scriptsize, above=0mm] at (U.north) {confounder};
  \draw[mapline] (2.0, -1.6) -- (4.5, -1.6);
  \node[font=\scriptsize, gray!90] at (3.25, -1.35) {$x \leftrightarrow Z$};
  \draw[mapline] (2.0, -2.0) -- (4.5, -2.0);
  \node[font=\scriptsize, gray!90] at (3.25, -1.75) {$a \leftrightarrow X$};
  \draw[mapline] (2.0, -2.4) -- (4.5, -2.4);
  \node[font=\scriptsize, gray!90] at (3.25, -2.15) {$b \leftrightarrow Y$};
  \draw[mapline] (2.0, -2.8) -- (4.5, -2.8);
  \node[font=\scriptsize, gray!90] at (3.25, -2.55) {$\lambda \leftrightarrow U$};
  \node[font=\scriptsize, gray!90, left] at (1.9, -2.2) {mapping:};
\end{tikzpicture}
\caption{The Bell--IV correspondence. \emph{Left:} Bell scenario---Alice chooses setting $x$, Bob chooses $y$, outcomes $a,b$ depend on settings and shared hidden variable $\lambda$. \emph{Right:} instrumental variable model---instrument $Z$ affects treatment $X$, which affects outcome $Y$, with latent confounder $U$. The mapping is: measurement setting $\leftrightarrow$ instrument, Alice's outcome $\leftrightarrow$ treatment, Bob's outcome $\leftrightarrow$ observed outcome, hidden variable $\leftrightarrow$ confounder.}\label{fig:bell-iv}
\end{figure}

\emph{The correspondence.} The CHSH inequality $|S| \leq 2$ maps to the instrumental inequality $\max_z \sum_y \max_{x'} P(Y=y, X=x' \mid Z=z) \leq 1$: both are facets of the same 16-vertex polytope. A quantum state violating $|S| \leq 2$ corresponds to a hypothetical causal model with a \emph{quantum confounder}---a latent entangled state shared between treatment assignment and outcome---that generates correlations impossible under any classical confounder. The Tsirelson bound $2\sqrt{2}$ gives the maximum such ``quantum confounding,'' while the no-signaling bound of 4 corresponds to Manski's worst-case (width-1) identified set.

\subsection{Why Quantum Mechanics Escapes the Classical Polytope}

Classical probability operates on commutative algebras: all random variables have a joint distribution. The classical Bell polytope is determined by this commutativity.

Quantum mechanics operates on non-commutative algebras: observables $A$ and $B$ need not commute, and when they do not, no joint distribution $P(A,B)$ exists in the Kolmogorov sense. The density matrix $\rho$ is the quantum analog of a joint distribution, but one that lives in the tensor product of Hilbert spaces rather than the product of measurable spaces.

\begin{theorem}[Non-commutativity implies Bell violation; cf.\ \cite{tsirelson1980quantum}]\label{thm:noncommute-bell}
If Alice's observables $\{A, A'\}$ and Bob's $\{B, B'\}$ satisfy $[A, A'] \neq 0$ on Alice's system (or $[B, B'] \neq 0$ on Bob's), then there exist states $\rho$ achieving CHSH $> 2$. The maximum violation $2\sqrt{2}$ is achieved when $[A, A'] = 2iI$ and $[B, B'] = 2iI$ (anticommuting observables).
\end{theorem}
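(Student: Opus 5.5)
The plan is to read the theorem with one correction and two interpretations, stated up front. As literally written, the ``or'' cannot be right. If $[B,B']=0$, then $B$ and $B'$ have a joint spectral measure, so Bob's side is classical. The CHSH operator $\mathcal{C}=A\otimes(B+B')+A'\otimes(B-B')$ then reduces, on each joint eigenspace of $B,B'$, to $\pm 2A$ or $\pm 2A'$, which has norm at most $2$. So no state violates CHSH. I will therefore prove the statement under the hypothesis that \emph{both} commutators are nonzero.

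Likewise $[A,A']=2iI$ is impossible for bounded operators: it has trace zero in finite dimensions, and by Wintner's theorem no commutator of bounded operators is a nonzero scalar in general. I will interpret the optimality condition as anticommutation, $\{A,A'\}=0=\{B,B'\}$, equivalently $\|[A,A']\|_{\mathrm{op}}=\|[B,B']\|_{\mathrm{op}}=2$. Finally, ``there exist states $\rho$'' is read with the four observables held fixed.

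The starting point is the identity already derived in the proof of Theorem~\ref{thm:tsirelson}:
\[
\mathcal{C}^2 = 4I\otimes I - [A,A']\otimes[B,B'] .
\]
Since $\sup_\rho |\tr(\rho\,\mathcal{C})| = \|\mathcal{C}\|_{\mathrm{op}}$ for self-adjoint $\mathcal{C}$, it suffices to show $\|\mathcal{C}\|_{\mathrm{op}}>2$, i.e.\ $\|\mathcal{C}^2\|_{\mathrm{op}}>4$. Then a (near-)eigenvector of $\mathcal{C}$ for the extremal eigenvalue, taken as a pure state $\rho$, gives CHSH $>2$.

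To compute the spectrum I will use Jordan's lemma. Two $\pm1$-valued observables $A,A'$ decompose $\mathcal{H}_A$ into invariant blocks of dimension $1$ or $2$. On a $2$-dimensional block they take the form
\[
A=\sigma_z,\qquad A'=\cos\theta\,\sigma_z+\sin\theta\,\sigma_x ,
\]
so that $[A,A']=2i\sin\theta\,\sigma_y$ there. In the infinite-dimensional case the blocks are a direct integral, and I would take approximate eigenvectors. Because $[A,A']\neq0$, some block has $\sin\theta\neq 0$. Bob's observables likewise have a block with angle $\varphi$ and $\sin\varphi\neq0$.

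Restricting to the product of these two qubit blocks, we get $[A,A']\otimes[B,B']=-4\sin\theta\sin\varphi\;\sigma_y\otimes\sigma_y$. Hence $\mathcal{C}^2$ has eigenvalues $4\pm4\sin\theta\sin\varphi$, and
\[
\|\mathcal{C}\|_{\mathrm{op}} = 2\sqrt{1+|\sin\theta\sin\varphi|}>2 .
\]
The maximizing state is an eigenvector of $\mathcal{C}$ on this $4$-dimensional subspace, which is a maximally entangled two-qubit state. This gives the violation. The value equals $2\sqrt2$ exactly when $|\sin\theta|=|\sin\varphi|=1$, i.e.\ when the observables anticommute on that block. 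Combined with the general upper bound from Theorem~\ref{thm:tsirelson}, this shows $2\sqrt2$ is attained precisely under anticommutation, the standard Pauli $\sigma_z,\sigma_x$ choice.

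The main obstacle is the Jordan-lemma reduction together with the sign bookkeeping. One has to check that $\mathcal{C}$ leaves the chosen product block invariant, so that its restricted eigenvector is a genuine eigenvector on the full space. One also has to check that both signs $\pm\sin\theta\sin\varphi$ occur in the spectrum, so the ``$+$'' branch is always available. In the infinite-dimensional case I would handle the supremum with approximate eigenvectors of the direct-integral decomposition rather than exact ones.
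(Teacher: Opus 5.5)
Your proof is correct, and both of your repairs to the statement are justified. With ``or'' the claim is false. The identity already shows this: $[B,B']=0$ forces $\mathcal{C}^2=4I\otimes I$, hence $\|\mathcal{C}\|_{\mathrm{op}}=2$. The paper's own one-line justification in fact assumes that both commutators are nonzero. Likewise $[A,A']=2iI$ cannot hold; the Pauli choice gives $[\sigma_z,\sigma_x]=2i\sigma_y$. So reading the optimality condition as anticommutation is the right repair.

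Your route differs from the paper's in completeness rather than in strategy. The paper only cites $\mathcal{C}^2=4I-[A,A']\otimes[B,B']$ and asserts that it ``permits'' $\langle\mathcal{C}\rangle>2$. That skips the actual point. Because $\mathcal{C}^2\ge 0$, a violation occurs iff $-[A,A']\otimes[B,B']$ has a strictly positive spectral value; if that term were negative semidefinite, one would get $\|\mathcal{C}\|_{\mathrm{op}}\le 2$. Your Jordan-lemma reduction supplies exactly this step, via the $\pm1$ eigenvalues of $\sigma_y\otimes\sigma_y$. It also gives explicit optimal states, the blockwise value $2\sqrt{1+|\sin\theta\sin\varphi|}$, and the maximal entanglement of the optimizer.

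If you want to avoid direct integrals in infinite dimensions, the same idea has a dimension-free form. Write $[A,A']=iX$ and $[B,B']=iY$ with $X,Y$ self-adjoint. Since $AXA=-X$ is a unitary conjugation, $\mathrm{spec}(X)$ is symmetric about $0$, so $\max\mathrm{spec}(X\otimes Y)=\|X\|_{\mathrm{op}}\|Y\|_{\mathrm{op}}$. Then $\mathcal{C}^2=4I+X\otimes Y$ gives
\[
\|\mathcal{C}\|_{\mathrm{op}}=\sqrt{4+\|[A,A']\|_{\mathrm{op}}\,\|[B,B']\|_{\mathrm{op}}}.
\]
This exceeds $2$ iff both commutators are nonzero, and it equals $2\sqrt2$ under anticommutation. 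The spectral version buys generality; yours buys concreteness.

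One caveat on your last claim: ``attained precisely under anticommutation'' holds only blockwise. The value $2\sqrt2$ is attained by a state iff some pair of Jordan blocks has $|\sin\theta|=|\sin\varphi|=1$. Global anticommutation is therefore sufficient but not necessary. In infinite dimensions, $\|\mathcal{C}\|_{\mathrm{op}}=2\sqrt2$ can hold without any state attaining it.
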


This follows from the proof of Theorem~\ref{thm:tsirelson}: when $[A,A'] \neq 0$ and $[B,B'] \neq 0$, the identity $\mathcal{C}^2 = 4I - [A,A'] \otimes [B,B']$ permits $\langle \mathcal{C} \rangle > 2$ for suitably chosen entangled states.

This is the link from quantum mechanics back to computation: non-commutativity is the computational resource. A quantum computer maintains superpositions of states that violate classical Bell inequalities; measuring collapses this to a classical state. The computation happens in the non-commutative interior; the output is the projected classical shadow.

\subsection{The Complete Information-Theoretic Arc}

We can now state the main synthesis.

\noindent\textbf{Main Synthesis: Layers of Equivalence and Analogy.}\label{thm:main}
Bell violation can be expressed in multiple mathematical languages, with varying degrees of formal equivalence:

\emph{Proven equivalences (Theorems~\ref{thm:fine}--\ref{thm:structural-equiv}):}
\begin{enumerate}
\item \textbf{Bell (physics):} No local hidden variable $\lambda$ exists such that $P(A,B \mid X,Y,\lambda) = P(A \mid X,\lambda)\,P(B \mid Y,\lambda)$.
\item \textbf{Fine (probability):} No joint Kolmogorov distribution on $(A,A',B,B')$ exists returning the observed marginals.
\item \textbf{Fr\'{e}chet (probability):} The correlation matrix lies outside the Fr\'{e}chet-feasible set for $\pm 1$ variables.
\item \textbf{Causal (statistics):} The instrumental inequality is violated, certifying invalidity of the IV model with classical confounder.
\end{enumerate}
Items 1--4 are formally equivalent via Fine's theorem and Theorem~\ref{thm:structural-equiv}.

\emph{Structural parallels (same polytope geometry, formal connection open):}
\begin{enumerate}\setcounter{enumi}{4}
\item \textbf{KC (algorithmic):} $K(AB \mid XY) > K(A \mid X\lambda) + K(B \mid Y\lambda)$ for any classical $\lambda$---the joint information cannot be classically factored.
\item \textbf{Entropic (information-theoretic):} The entropy vector $(H(A), H(B), H(AB \mid X), \ldots)$ lies outside the classical Shannon cone for the Bell DAG.
\item \textbf{QBC (computation):} Non-commutativity enables quantum states to violate classical factorization, providing speedups in Bayesian posterior inference over commutative (classical) computation.
\end{enumerate}
Items 5--7 exhibit the same structural feature---failure of classical factorization---across three distinct domains (algorithms, entropy, computation), but formal equivalence between these layers and items 1--4 remains an open problem.

\section{Discussion}\label{sec:discussion}

\subsection{Implications for Quantum Information}

The polytope dictionary (Table~\ref{tab:dictionary}) provides quantum information researchers with a systematic bridge to the large causal inference literature. Several concrete transfers are possible.

First, the response-type enumeration from causal inference \cite{balke1997bounds} gives an explicit vertex description of the local-realist polytope for any Bell scenario defined by a DAG with latent variables. For the standard CHSH scenario the 16 deterministic strategies are well-known, but for multi-setting, multi-outcome, or network Bell scenarios, the causal inference LP machinery provides a principled way to enumerate vertices and compute facets without ad hoc case analysis.

Second, the inflation technique \cite{wolfe2019inflation}---which derives testable implications from causal DAGs by constructing enlarged copies of the original graph---subsumes Bell inequalities, instrumental inequalities, and network nonlocality constraints as special cases. This technique has been developed independently in the quantum foundations community but has direct analogs (and a longer history) in econometrics.

Third, the connection clarifies the role of the NPA hierarchy beyond quantum correlations. The SDP relaxation in \eqref{eq:npa-sdp} can be applied directly to causal DAGs with latent variables: replace Bell correlations with observed conditional distributions, and replace the CHSH functional with a causal effect. The level-$k$ NPA bound then gives an outer approximation to the set of causal effects achievable by quantum confounders---a ``quantum partial identification'' that strictly contains the classical Balke--Pearl bounds.

\subsection{Implications for Causal Inference}

The connection also runs in the reverse direction (see \cite{gill2014statistics} for a statistical perspective). Testing the instrumental inequality is a sharp diagnostic for IV validity: violation certifies model misspecification as definitively as Bell violation certifies the absence of a local hidden variable. The polytope perspective clarifies what structural assumptions buy---each monotonicity condition cuts a face from the feasibility polytope, tightening bounds, just as closing a Bell loophole tightens the achievable classical CHSH value. Combining multiple weak assumptions can produce tight bounds because the intersection of half-spaces is smaller than either one. The Bell analogy also suggests a new sensitivity analysis: vary structural assumptions continuously and track how the feasible polytope shrinks, rather than imposing hard constraints.

\subsection{Implications for Machine Learning}

The K-GAM / KST connection suggests that sparsity in function factorization (horseshoe prior) is the classical analog of the quantum compression that makes QBC efficient. The implication for neural network design is that the additive structure of the KST---outer functions composed with sums of inner functions---provides an inductive bias that is both theoretically grounded (it can represent any continuous function) and practically efficient (it requires far fewer parameters than a fully connected MLP for functions with low effective dimensionality).

Three directions merit further investigation. First, a \emph{Bayesian K-GAM} with horseshoe prior on the response types $q_{ij}$ in the Balke--Pearl LP would yield a probabilistic partial identification approach. Rather than computing sharp bounds via linear programming, one would place a prior over the response-type polytope and compute posterior distributions over the causal effect. The horseshoe prior would concentrate mass on the vertices of the polytope (deterministic response types), providing an automatic model-selection mechanism that favors parsimonious causal explanations.

Second, a \emph{Quantum K-GAM} running the KST on a quantum computer, where the outer functions $\Phi_q$ are quantum circuits and the inner functions $\varphi_{q,p}$ are qubit embeddings, would realize the full QBC architecture. The KST guarantees that $2n+1$ terms suffice classically; quantum superposition could evaluate all terms simultaneously, and the horseshoe prior on the $q$-index would collapse to the minimal active set upon measurement.

Third, Generative Bayesian Computation \cite{polson2026generative} can be understood as the classically computable shadow of the quantum Gaussian process---the implicit quantile network in GBC is the classical analog of quantum posterior sampling. Where QBC uses amplitude encoding to represent the posterior as a quantum state and measurement to draw samples, GBC uses a neural network to learn the quantile function of the posterior and uniform random inputs to generate samples. Both avoid the mixing-time bottleneck of MCMC; QBC achieves this via quantum speedup, GBC via direct function approximation.

More broadly, the polytope framework provides a way to think about the expressiveness of machine learning models. A classical model restricted to commutative operations (e.g., a generalized additive model) can represent distributions inside the local-realist polytope. A model with non-commutative operations (e.g., attention mechanisms with non-symmetric interactions, or quantum circuits) can represent distributions outside this polytope. The gap between the two is the ``Bell gap'' in expressiveness, and quantifying it for specific model classes is an open problem.

\subsection{Scope and Limitations}

The analogies developed in this paper are structural: the mathematical objects (polytopes, feasibility sets, information inequalities) take the same form across quantum physics, causal inference, and computation. This does not imply operational equivalence. A Bell test in a physics laboratory and a partial identification analysis in economics involve different physical systems, different data-generating processes, and different epistemic commitments. The polytope dictionary (Table~\ref{tab:dictionary}) is a formal correspondence, not a claim that entanglement ``is'' confounding or that treatment effects ``are'' quantum correlations. The computational speedup claims in Section~\ref{sec:qbc} assume fault-tolerant quantum hardware; on near-term noisy devices, the practical advantage remains an open question.

\subsection{Implications for Quantum Computing}

The MIP$^*$=RE theorem implies that determining quantum Bell violation is undecidable in general. But for the specific structures arising in causal inference (binary IV models, triangular graphs), the NPA hierarchy terminates at small levels. The level-2 NPA bound often gives the exact Tsirelson bound for practical Bell inequalities, just as the Balke--Pearl LP at the response-type level gives exact bounds for binary IV models.

This suggests a practical \emph{quantum causal inference algorithm}: (i) formulate the causal model as a Bell-type DAG; (ii) use the NPA SDP at level 2 to compute the quantum bound; (iii) compare with the classical (Balke--Pearl LP) bound; (iv) the gap measures the quantum advantage in causal identification. For binary IV models, both the LP and SDP are small enough to solve on a laptop; for larger DAGs with more latent variables, the NPA hierarchy provides a systematic outer approximation.

The framework also clarifies the role of entanglement in quantum computing. The standard narrative identifies entanglement as the source of quantum advantage, but this is imprecise: separable (unentangled) quantum states can also outperform classical strategies in some communication tasks. The polytope framework gives a sharper characterization. The quantum advantage arises whenever the set of achievable quantum correlations strictly exceeds the classical polytope for the problem's causal structure. Entanglement is sufficient but not necessary for this; contextuality (the impossibility of assigning definite pre-measurement values to all observables simultaneously) is the more fundamental resource, and it maps precisely to the non-existence of a joint distribution in Fine's theorem.

For near-term quantum computing, the practical question is whether the quantum-over-classical gap in the NPA hierarchy is large enough to matter for problems of real-world scale (see \cite{gill2022quantum} for a discussion of the statistical and computational aspects). The CHSH gap ($2\sqrt{2}/2 \approx 1.41$) is modest. For more complex Bell scenarios with more measurement settings and outcomes, the gap can be larger, and the causal inference translation suggests that the gap may be especially large for DAGs with many latent confounders---precisely the setting where classical partial identification bounds are widest and most uninformative.

\subsection{Open Problems}

The unified framework raises several open questions that span the boundaries of quantum information, causal inference, and statistical computation.

\emph{Tight quantum causal bounds.} For a given causal DAG with latent variables, what are the tight quantum bounds on causal effects---that is, what causal effects are achievable if the latent confounders are quantum rather than classical? The NPA hierarchy gives an outer approximation; the question is when this approximation is tight, and whether the quantum-over-classical gap can be computed efficiently for DAGs arising in econometric applications.

\emph{Finite-sample polytope inference.} The Balke--Pearl bounds and Manski bounds assume the observed distribution $P(Y,X \mid Z)$ is known exactly. In practice, it is estimated from data. Constructing valid confidence regions for the identified set---and understanding how the polytope geometry interacts with statistical uncertainty---requires extending the framework to finite-sample settings (see \cite{gill2014statistics} for the parallel problem in Bell testing). The entropic Bell inequalities (Section~\ref{sec:kolmogorov}) may provide a natural route, since entropy estimation has well-understood finite-sample properties.

\emph{Computational complexity of partial identification.} For binary variables, the Balke--Pearl LP has 16 variables and is trivial to solve. For multi-valued or continuous treatments and outcomes, the response-type polytope grows combinatorially. Is there a polynomial-time algorithm for computing tight causal bounds in general, or does partial identification become computationally hard for large problems? The connection to MIP$^*$=RE suggests that the quantum version is undecidable, but the classical version may have a more favorable complexity landscape.

\emph{Classical architectures that saturate the quantum gap.} K-GAM networks implement the KST and provide the classical architecture closest to quantum function evaluation. Can other classical architectures---transformers, diffusion models, normalizing flows---be characterized in terms of which polytope faces they can access? A ``Bell classification'' of machine learning architectures, based on the correlations they can represent, would connect expressiveness theory to quantum information in a concrete way.

\section{Conclusion}\label{sec:conclusion}

The central result of this paper is that Bell inequalities, Fr\'{e}chet coupling bounds, instrumental inequalities, partial identification bounds, and probabilities of causation are all facets of a single marginal compatibility polytope. The key insight is that \emph{joint information resists factorization}: in physics this resistance is entanglement, in econometrics it is treatment effect heterogeneity, in computation it is quantum advantage, and in information theory it is the excess Kolmogorov complexity of joint versus marginal descriptions.

The Born rule / Bayes rule duality provides the unifying language. Classical Bayesian inference is the special case where the density matrix is diagonal and factors into a product state. Quantum computation, Bell violation, and sparse Bayesian function approximation (K-GAM) are all powered by the departure from this commutativity. The framework suggests concrete next steps: Bayesian K-GAM for probabilistic partial identification, quantum implementations of KST-based architectures, and NPA-based algorithms for quantum causal inference.

\appendix
\section{Proofs}\label{app:proofs}

\begin{proof}[Proof of Proposition~\ref{prop:bell-frechet}]
The Fr\'{e}chet--Hoeffding bounds give the extremal joint distributions for $\pm 1$ variables. The correlation of two $\pm 1$ variables $A, B$ under their joint is $E[AB] = P(A=B) - P(A \neq B) = 2P(A=B) - 1$. For three $\pm 1$ variables $A, B, C$ to have a joint distribution, the $3 \times 3$ correlation matrix $\Sigma$ must be PSD and achievable by a distribution on $\{-1,+1\}^3$. The Boole--Bell inequality $|E[AB] - E[AC]| \leq 1 - E[BC]$ is exactly the constraint from Fr\'{e}chet feasibility applied to the 2-dimensional marginals.
\end{proof}

\bigskip

\begin{proof}[Proof of Proposition~\ref{prop:horseshoe}]

\noindent\textit{Item 1 (Posterior Sparsity).} The horseshoe marginal density (integrating out $\lambda_j$ from $N(0, \lambda_j^2 \tau^2) \cdot C^+(0,1)$) is
\[
\pi(\theta_{q,j} \mid \tau) = \int_0^\infty \frac{1}{\sqrt{2\pi \lambda^2 \tau^2}} \exp\!\left(-\frac{\theta_{q,j}^2}{2\lambda^2 \tau^2}\right) \frac{2}{\pi(1+\lambda^2)}\,d\lambda.
\]
This has a pole at the origin and Cauchy-like tails $\pi(\theta \mid \tau) \sim \log(1 + 4\tau^2/\theta^2)$ for large $|\theta|$ \cite{polson2010shrink}. By Bayes' theorem,
\[
\pi(\theta \mid \mathcal{D}) \propto L(\mathcal{D} \mid \theta) \prod_{q,j} \pi(\theta_{q,j} \mid \tau),
\]
where $L(\mathcal{D} \mid \theta) = \exp\bigl(-\frac{1}{2\sigma^2} \sum_i (y_i - f(x_i; \theta))^2\bigr)$. The pole at zero shrinks most coefficients to negligible values while the heavy tails preserve large signals. Polson \& Scott (2010, Theorem 2) establish that only a fixed number of coefficients dominate the posterior mean.

\noindent\textit{Item 2 (Near-minimax rates).} The horseshoe prior achieves near-minimax rates for sparse estimation in the Gaussian sequence model. This is established in Polson \& Scott (2010, Theorem 3) and Bhadra \& Rao (2016) for high-dimensional regression. For K-GAM, the model can be viewed as a generalized linear model in the basis functions $\{\Phi_q\}$:
\[
\hat{\theta} = \arg\min_\theta \frac{1}{2N} \sum_i (y_i - f(x_i; \theta))^2 + \lambda \sum_{q,j} \frac{|\theta_{q,j}|}{\tau_q \lambda_j},
\]
where the horseshoe acts as an adaptive $\ell_1$ penalty. Under the assumption that the true function is sparse (supported on $s \ll 2n+1$ outer functions), the posterior mean satisfies
\[
\mathbb{E} \|\hat{\theta} - \theta^*\|_2^2 \leq C_0 \sigma^2 s \log(d(2n+1)) / N
\]
where $C_0$ depends on the feature norms and condition numbers. This is within a log factor of the information-theoretic lower bound for estimating $s$-sparse signals in $\mathbb{R}^{d(2n+1)}$.

\noindent\textit{Item 3 (Model selection).} By items 1 and 2, the posterior concentrates on sparse coefficient vectors. Define the set of active outer functions as $\mathcal{A} = \{q : \|\hat{\theta}_q\|_\infty > t_\lambda\}$ where $t_\lambda$ is a threshold (e.g., $t_\lambda = \sigma \sqrt{\log(d(2n+1))/N}$). From item 1, the posterior mass on $\theta$ with $|\mathcal{A}|$ large is exponentially suppressed relative to solutions with $|\mathcal{A}| \approx s$. Thus, the horseshoe K-GAM estimator automatically selects roughly $s$ outer functions. Since $s$ is the true sparsity and $s \ll 2n+1$ by assumption, this represents a substantial reduction from the KST upper bound. The model achieves the universality of KST (any continuous function on $[0,1]^n$ has a $2n+1$-term representation) without paying the full parametric cost of estimating all $d(2n+1)$ coefficients.

\end{proof}

\bigskip

\noindent\textbf{Acknowledgments.} The authors thank Hedibert Lopes and Alec Litowitz for stimulating discussions. Polson acknowledges support from the University of Chicago. Sokolov acknowledges support from George Mason University.

\bigskip

\noindent\textbf{Statements and Declarations}

\smallskip

\noindent\textbf{Author Contributions.} All authors contributed equally to this work. All authors participated in the conceptualization, formal analysis, writing of the original draft, and review and editing.

\smallskip

\noindent\textbf{Funding.} No funding was received for conducting this study.

\smallskip

\noindent\textbf{Competing Interests.} The authors have no competing interests to declare that are relevant to the content of this article.

\smallskip

\noindent\textbf{Data Availability.} This article has no associated data. All results are analytical.

\bibliography{ref}
\end{document}